\newlength{\mywidth}
\newcommand\bigfrown[2][\textstyle]
\newtheorem{theorem}{Theorem}[section]
\theoremstyle{definition}
\newtheorem{lemma}[theorem]{Lemma}
\newtheorem{cor}[theorem]{Corollary}
\newcounter{ResultsCounter}
\newcommand*{\email}[1]{\texttt{#1}}
\begin{document}
		\date{}	
		\title{A weighted binary average of point-normal pairs 
			   with application to subdivision schemes }
%		\author{, Nira Dyn}
\author{Evgeny Lipovetsky\footnote{Corresponding author} $^{,}$\thanks{\email{evgenyl@post.tau.ac.il}, School of Computer Sciences, Tel-Aviv Univ.,Israel} \and 
Nira Dyn\thanks{\email{niradyn@post.tau.ac.il}, School of Mathematical Sciences, Tel-Aviv Univ., Israel}}

	\maketitle

%		\author[cstau]{Evgeny Lipovetsky\corref{correspauth}}
%		\cortext[correspauth]{Corresponding author}
%		\ead{evgenyl@post.tau.ac.il}
%
%		\author[mathtau]{Nira Dyn}
%		\ead{niradyn@post.tau.ac.il}
%
%		\address[cstau]{School of Computer Sciences, Tel-Aviv Univ., Tel-Aviv,Israel}
%		\address[mathtau]{School of Mathematical Sciences, Tel-Aviv Univ., Tel-Aviv,Israel}

		\begin{abstract}
			Subdivision is a well-known and established method for generating
			smooth curves and surfaces from discrete data by repeated refinements.
			The typical input for such a process is a mesh of vertices. 
			In this work we propose to refine 2D data consisting
			of vertices of a polygon and a normal at each vertex. Our core refinement procedure is based on a \textit{\textbf{circle average}}, which is a new non-linear weighted average of two points and their
			corresponding normals. The ability to locally approximate curves by 
			the circle average is demonstrated. With this ability, the circle
			average is a candidate for modifying linear subdivision schemes 
			refining points, to schemes refining point-normal pairs. This is done 
			by replacing the weighted binary arithmetic means in a linear
			subdivision scheme, expressed in terms of repeated binary averages, 
			by circle averages with the same weights. Here we investigate the 
			modified Lane-Riesenfeld algorithm and the 4-point scheme.
			For the case that the initial data consists of a control polygon only,
			a naive method for choosing initial normals is proposed. An example
			demonstrates the superiority of the above two modified schemes, with
			the naive choice of initial normals over the corresponding linear
			schemes, when applied to a control polygon with edges of significantly different lengths.	
		\end{abstract}		
		{\bf Keywords:} 
			non-linear subdivision schemes, 
			2D curve design,
			weighted binary average of point-normal pairs,
			convergence,
			Lane-Riesenfeld algorithm,
			4-point scheme	
	
%------------------------------------------------------------------------------
\section{Introduction}
\label{sec:intro}
Subdivision schemes generate smooth curves/surfaces from discrete data
by repeated refinements. Linear schemes are well understood and have been used
in applications, such as Computer Graphics and Computer Aided Geometric Design.
The typical input to these schemes consists of a mesh of vertices.
For information on linear subdivision schemes see e.g. \cite{dl:02}.
In recent years linear schemes were adapted to refine other types of geometric
objects such as sets, manifold-valued data, and nets of functions (see e.g. \cite{df:02}, \cite{donoho}, \cite{wd:05}, \cite{cd:11}).

This paper is motivated by the idea to design subdivision schemes generating
surfaces by repeated refinements of 3D point-normal pairs. 
As a first step towards this aim we designed and investigated subdivision schemes
generating 2D curves by repeated refinements of 2D point-normal pairs (PNPs).
The subdivision schemes considered in this work are based on a geometric
construction. These schemes are significantly different from Hermite schemes,
which are linear schemes refining point-tangent pairs \cite{merrien:92}.
We plan to extend our schemes to schemes generating surfaces by refining point-normal pairs. It is important to note that point-normal pairs
can be obtained from point-tangent pairs but not vice-versa.

The approach taken here is similar to that taken in the adaptation of linear
subdivision schemes to manifold-valued data in \cite{ds:14},\cite{wd:05} and to 
sets in \cite{df:02},\cite{kd:13}.
The binary arithmetic mean in the refinement rules of linear subdivision 
schemes, expressed in terms of such repeated averages, is replaced by 
a weighted binary average of two PNPs. Such an average is designed here, 
based on a geometric construction involving a circle and hence its name
\textit{circle average}. 
With this average we modify the Lane-Riesenfeld algorithm~\cite{lr:80}, namely all
spline subdivision schemes, and the 4-point scheme \cite{dubucDeslauriers2},
\cite{dgl:87} to refine PNPs. 

Other modifications of these schemes which refine points
are available. The most relevant to our work are \cite{dosa:05},\cite{cahore:13},
and we plan to compare the performance of our modifications with their performance.

An interpolatory scheme refining PNPs, where the inserted PNP is determined by a similar
construction to the circle average with weight $\frac{1}{2}$, is presented in \cite{jue:07p}.
While in \cite{jue:07p}, the scheme converges and the limit of the normals
is equal to the normals of the limit curve, in our schemes this is not
necessarily the case. Yet our approach yields a variety of subdivision
schemes which are not limited to a subclass of initial PNPs as in
\cite{jue:07p}.

Here is an outline of the paper.

\noindent In section~\ref{sec:average} we first define the circle average by an explicit
geometric construction, and then prove that it is indeed an average. For that we prove 
the \textit{consistency property}, which guarantees that all repeated averages
originating from two PNPs can be expressed as one average with an appropriate weight.
We also
show that the circle average approximates well short pieces of smooth curves, 
which makes it a good candidate for modifying linear subdivision schemes refining
points to schemes refining PNPs, by the approach mentioned above. In 
section~\ref{sec:subdivision} we modify in
this way
 the Lane-Riesenfeld algorithm and also the interpolatory 4-point scheme. We prove that the modified schemes are convergent, and demonstrate
by figures and a video their editing capabilities. We provide also a simple
method for defining initial normals, in case the input consists
of control points only. The advantage of the resulting schemes over the
corresponding linear schemes is demonstrated for initial control polygons
with edges of significantly different lengths.

%------------------------------------------------------------------------------
\section{The average}
\label{sec:average}
In this section we present the construction of a weighted binary average of
two pairs each consisting  of a point and a normal. All the weighted averages
of the two pairs are located on a circle.   
 When the two pairs are sampled from a circle, the weighted averages stay on that circle.
\subsection{Construction of the circle average}
We first introduce a new binary operation and then show that it is an average,
which we term \textit{the circle average}.
Given a real weight $\omega \in [0,1]$ and two pairs, each 
consisting of a point and a normal unit vector $P_0 = (p_0,n_0)$ and 
$P_1 = (p_1, n_1)$ in $2D$ space, we produce a new pair $P_\omega = 
(p_\omega, n_\omega)$ denoted by $P_0\circledcirc_\omega P_1$.
For $\omega = \frac{1}{2}$ we use also the shorter notation
$P_0\circledcirc P_1$.

To present the operation $P_0 \circledcirc_{\omega} P_1$ we introduce
some notation. 
The line defined by the vector $n_i$ and passing through 
the point $p_i$ is denoted by $l_i, i=0,1$.
The angle $\theta(u,v)$ denotes the angle between the vectors $u$ and $v$. 
In the special case of $u=n_0$ and $v=n_1$, the symbol $\theta$ 
substitutes $\theta(n_0,n_1)$. Observe that $0 \le \theta \le \pi$.
The length of the segment $[p_0,p_1]$ is denoted by $|p_0p_1|$,
and $\overrightarrow{p_0p_1}$ denotes the vector $\overrightarrow{p_1 - p_0}$.

Given three non-collinear points $a,b,c$, we denote by $bc$ the line
passing through $b$ and $c$, and by $HP(a;bc)$ 
the half-plane defined by the line $bc$ which contains the point $a$.
For two unit vectors $u=(\cos\alpha,\sin\alpha),v=(\cos\beta,\sin\beta)$,
we denote by $GA(u,v;\omega)$ their weighted geodesic average given by
\begin{align}
GA(u,v;\omega) = (\cos\gamma, \sin\gamma), \ 
\gamma = (1-\omega)\alpha + \omega \beta. \label{eq:geoavg}
\end{align}

\noindent The construction of $P_\omega = \{p_\omega, n_\omega\} = 
P_0\circledcirc_\omega P_1$ is done in several steps.
\begin{enumerate}
	\item Construct the perpendicular $[p_0,p_1]^{\perp}$ to the segment
	      $[p_0,p_1]$ at its midpoint.
	      Compute the angle $\theta$.
	      Construct two circles with centers $o_0$ and $o_1$ on
	      $[p_0,p_1]^{\perp}$, 
	      passing through $p_0$ and $p_1$, so that the central angles 
	      $\varangle p_{0} o_{i} p_{1}, i = 0,1$ equal $\theta$. Note that 
	      the two circles are symmetric relative to the segment $[p_0,p_1]$,
	      with the same radius $\frac{|p_0p_1|}{2sin\frac{\theta}{2}}$.
	\item For each circle, take the short arc connecting $p_0$ and $p_1$.
	      We call the above two arcs "candidate arcs", and the two circles
	      "candidate circles". One of the candidate arcs is chosen in the 
	      next step. We denote the selected candidate arc by 
	      $\bigfrown{P_0 \circledcirc P_1}$, its length by 
	      $|\bigfrown{P_0 \circledcirc P_1}|$, and the center of the 
	      corresponding circle by $o^*$.
	\item \textbf{Selection Criterion.} Let $q$ be the intersection point 
	      of $l_0$ and $l_1$. Consider the two half-planes defined by 
	      the line $p_0p_1$. If $n_0$ and $n_1$ are in different half-planes
	      (relative to $p_0p_1$) then take as $\bigfrown{P_0 \circledcirc P_1}$
	      the arc which is in the same half-plane as $q$,
	      otherwise $\bigfrown{P_0 \circledcirc P_1}$ is the other candidate 
	      arc. 
	\item Compute $p_\omega \in \bigfrown{P_0 \circledcirc P_1}$ such 
	      that the length of the part of $\bigfrown{P_0 \circledcirc P_1}$
	      between $p_0$ and $p_\omega$ is 
	      $\omega|\bigfrown{P_0 \circledcirc P_1}|$, or equivalently such that
	      the angle $\varangle p_0o^*p_\omega = \omega \theta$.
	\item Take the normal $n_\omega$ as $GA(n_0, n_1; \omega)$.
\end{enumerate}
See Figure~\ref{fig:construction} for examples. The selection criterion
and the following special cases are chosen to guarantee that the circle average
depends continuously on the data.
\\ \\
Special cases:
\begin{enumerate}[(i)]
	\vspace{-10px}
	\item If $\theta = 0$, i.e. $n_0 = n_1$, then $\bigfrown{P_0 
		  \circledcirc P_1} = [p_0,p_1]$, $p_\omega = (1-\omega)p_0 +
		  \omega p_1,$ and $n_\omega = n_0$.
	\item In case $\theta = \pi$ the construction is not defined.
	\item If $n_1 \parallel p_0p_1$ then 
	      we consider both normals to be in the same half-plane relative to $p_0p_1$, and $q$ to be in the same half-plane as $n_0$ 
	      when $\theta(n_1, \overrightarrow{p_0p_1}) = \pi$, and in the other
	      half-plane when $\theta(n_1, \overrightarrow{p_0p_1}) = 0$.
	      The case $n_0 \parallel p_0p_1$ is dealt with similarly.
	\item If $|p_0p_1| = 0$, i.e. $p_0 = p_1$, then $p_\omega = p_0$, and    
	      $n_\omega$ is computed as in 5.
\end{enumerate}
Note that $P_0 \circledcirc_0 P_1 = P_0$ and $P_0 \circledcirc_1 P_1 = P_1$.
\begin{figure} [!htb]
	\begin{subfigure}[b]{0.4\textwidth}
		\centering
		\includegraphics[scale=0.65]{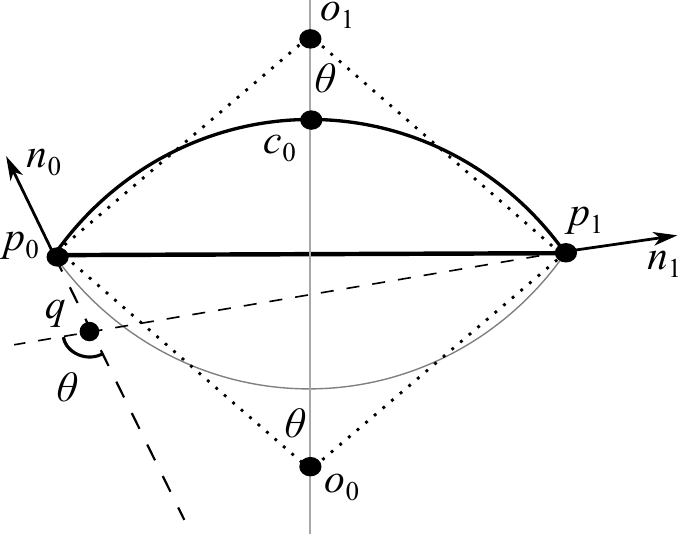} 
		\caption{$n_0$ and $n_1$ are in the same half-plane: 
		$p_{\frac{1}{2}} = c_0$.}
	\end{subfigure}
	\qquad\qquad\qquad
	\begin{subfigure}[b]{0.4\textwidth}
	\centering
		\includegraphics[scale=0.65]{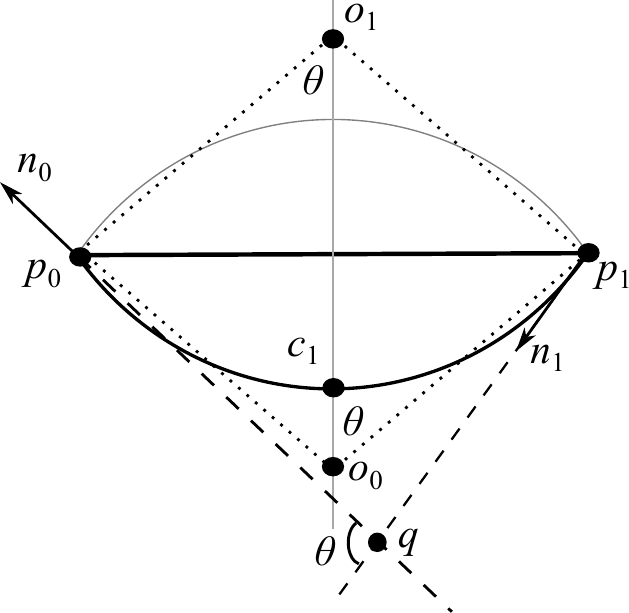} 
		\caption{$n_0$ and $n_1$ are in different half-planes: 
		$p_{\frac{1}{2}} = c_1$.}
	\end{subfigure}
	\captionsetup{justification=centering}
	\caption{Construction of $P_0 \circledcirc_{\frac{1}{2}} P_1$.
		\\$\bigfrown{P_0 \circledcirc P_1}$ is the bold arc.}
	\label{fig:construction}
\end{figure}

Two examples of the construction are given in Figure~\ref{fig:construction}.
In the left example, the point $c_0$ is taken as the point $p_\frac{1}{2}$ since 
$n_0, n_1 \notin HP(q;p_0p_1)$ and $c_1\notin HP(q;p_0p_1)$. 
In the right example, $n_1 \in HP(q;p_0p_1)$ while $n_0 \notin HP(q;p_0p_1)$.
Thus the point $c_1 \in HP(q;p_0p_1)$ is selected as $p_\frac{1}{2}$.
Note that the candidate arcs in both cases are the same, since in both examples
$\theta$ is the same.

In the next subsection we show that $P_0 \circledcirc_\omega P_1$
is indeed  a weighted average.

%-------------------------------------------------------------------------------

\subsection{The Consistency property}
In this section we show that 
\begin{align}
\forall t,s,k \in [0,1], 
(P_0 \circledcirc_t P_1) \circledcirc_k (P_0 \circledcirc_s P_1) 
= P_0 \circledcirc_{\omega^*} P_1,\  \omega^* = ks+(1-k)t \label{eq:consistency}
\end{align}
We call this property of the new operation 
\textit{\textbf{consistency}}. With this property the operation $\circledcirc_\omega$
is an average.

To prove (\ref{eq:consistency}), we first show
\begin{lemma}
	Assume w.l.o.g. that $t<s$.
	Let $P_t = P_0 \circledcirc_t P_1$, and $P_s = P_0 \circledcirc_s P_1$. 
	Then one of the candidate
	circles for $\bigfrown{P_t \circledcirc P_s}$ is the same as the 
	circle of $\bigfrown{P_0 \circledcirc P_1}$.
\label{lemma:createsamecircle}
\end{lemma}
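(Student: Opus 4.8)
The plan is to exhibit a concrete candidate circle for $\bigfrown{P_t \circledcirc P_s}$ and verify that it is the circle of $\bigfrown{P_0 \circledcirc P_1}$. Denote the latter circle by $C$, its center by $o^*$, and recall that its central angle is $\theta = \theta(n_0,n_1)$. According to Step~1 of the construction, a candidate circle for $\bigfrown{P_t \circledcirc P_s}$ is any circle through $p_t$ and $p_s$ whose central angle $\varangle p_t o^* p_s$ equals the angle between the two input normals, here $\theta(n_t,n_s)$; and there are exactly two such circles, symmetric about the chord $[p_t,p_s]$. So it suffices to check that $C$ passes through both $p_t$ and $p_s$ and that $\varangle p_t o^* p_s = \theta(n_t,n_s)$: these two facts force $C$ to be one of the two candidates.

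The verification proceeds in three short steps. First, $p_t,p_s\in C$: by Step~4 the points $p_t$ and $p_s$ are the points $p_\omega$ obtained for $\omega=t$ and $\omega=s$, hence they lie on the selected arc $\bigfrown{P_0 \circledcirc P_1}\subset C$. Second, the central angle: again by Step~4 we have $\varangle p_0 o^* p_t = t\theta$ and $\varangle p_0 o^* p_s = s\theta$, both measured along the same arc from $p_0$, so since $t<s$ the point $p_t$ lies between $p_0$ and $p_s$ and
\[
\varangle p_t o^* p_s = s\theta - t\theta = (s-t)\theta .
\]
Third, the angle between the normals: writing $n_0=(\cos\alpha,\sin\alpha)$, $n_1=(\cos\beta,\sin\beta)$ with representatives chosen so that the geodesic average travels the short arc, i.e. $|\beta-\alpha|=\theta\le\pi$, Step~5 together with (\ref{eq:geoavg}) gives $n_t=(\cos\gamma_t,\sin\gamma_t)$ and $n_s=(\cos\gamma_s,\sin\gamma_s)$ with $\gamma_t=\alpha+t(\beta-\alpha)$ and $\gamma_s=\alpha+s(\beta-\alpha)$. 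Hence $|\gamma_s-\gamma_t|=(s-t)|\beta-\alpha|=(s-t)\theta\le\theta\le\pi$, so that
\[
\theta(n_t,n_s)=|\gamma_s-\gamma_t|=(s-t)\theta=\varangle p_t o^* p_s .
\]
Thus $C$ is a circle through $p_t,p_s$ with central angle $\theta(n_t,n_s)$, which is exactly the defining property of a candidate circle for $\bigfrown{P_t \circledcirc P_s}$, and the lemma follows.

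The step that requires the most care is the normal computation. One must fix the angular representatives $\alpha,\beta$ so that the geodesic average covers the short arc, ensuring $|\beta-\alpha|=\theta$ rather than $2\pi-\theta$, and then confirm $(s-t)\theta$ does not exceed $\pi$ so that it genuinely coincides with $\theta(n_t,n_s)\in[0,\pi]$; both hold because $\theta\le\pi$ (the case $\theta=\pi$ being excluded) and $s-t\le 1$. It is worth emphasizing that the argument never invokes any radiality of the normals with respect to $C$: what makes the central angle and the normal angle agree is simply that the geodesic average increases the normal angle linearly in $\omega$ at rate $\theta$, the same rate at which the central angle grows along the arc. The degenerate configurations ($\theta=0$, $p_0=p_1$, or $n_i\parallel p_0p_1$) are handled by matching them against the special cases (i), (iii) and (iv), each of which reduces to the same angular bookkeeping.
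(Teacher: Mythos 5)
Your proposal is correct and follows essentially the same route as the paper's proof: exhibit the circle of $\bigfrown{P_0 \circledcirc P_1}$ as a candidate by checking it passes through $p_t$ and $p_s$ and that its central angle $\varangle p_t o^* p_s = (s-t)\theta$ equals $\theta(n_t,n_s)$. Your version merely spells out the details the paper leaves implicit (the choice of angular representatives in the geodesic average and the bound $(s-t)\theta \le \pi$), which is a welcome but not substantively different elaboration.
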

\begin{proof}
	Let $o^*$ denote the center of the circle of $\bigfrown{P_0 \circledcirc P_1}$. 
	We show that this circle meets the requirements of a candidate circle
	for $\bigfrown{P_t \circledcirc P_s}$. 
	Indeed, it passes through $p_t$ and $p_s$, and the central angle
	$\varangle p_to^*p_s$ equals $(s-t)\theta$, which is the angle between $n_t$ 
	and $n_s$. Thus this circle is a candidate circle for 
	$\bigfrown{P_t \circledcirc P_s}$.
\end{proof}
Our proof of the consistency property is based upon a classical result
in Euclidean geometry.
\begin{lemma}
	Let $a,b,c,d$ be the four vertices of a convex quadrilateral and let 
	$\varangle a, \varangle b,$ $ \varangle c, \varangle d$ be the angles of 
	the quadrilateral at the corresponding vertices. Then
	\[\varangle a \ge \pi-\varangle b \iff \pi-\varangle d \ge \varangle c.\]
	See Figure~\ref{fig:keepsmallerangle} for an example.
\label{lemma:keepsmallerangle}
\end{lemma}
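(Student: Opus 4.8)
The plan is to reduce the claimed equivalence to the elementary fact that the four interior angles of a convex quadrilateral sum to $2\pi$. First I would record this identity,
\[
\varangle a + \varangle b + \varangle c + \varangle d = 2\pi,
\]
which holds for any simple quadrilateral and in particular for the convex one under consideration. This is the only external ingredient the argument needs.

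Next I would rewrite each side of the asserted equivalence as a statement about a sum of two of the angles. The left-hand condition $\varangle a \ge \pi - \varangle b$ is literally $\varangle a + \varangle b \ge \pi$, while the right-hand condition $\pi - \varangle d \ge \varangle c$ is literally $\varangle c + \varangle d \le \pi$. Substituting the angle-sum identity to replace $\varangle a + \varangle b$ by $2\pi - (\varangle c + \varangle d)$ turns the first inequality into $2\pi - (\varangle c + \varangle d) \ge \pi$, i.e.\ $\varangle c + \varangle d \le \pi$, which is exactly the second condition. Since every step is an equivalence, chasing them in both directions yields the stated biconditional and completes the proof.

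I do not anticipate a genuine obstacle: once the angle-sum identity is in place the result is a one-line algebraic manipulation. The only point that warrants a word of care is the justification for using $2\pi$ as the total interior angle, which is guaranteed by the quadrilateral being convex (hence simple) and which also ensures that each interior angle lies in $(0,\pi)$, so that the quantities $\pi - \varangle b$ and $\pi - \varangle d$ are meaningful. If one preferred to avoid invoking the angle-sum formula, an alternative would be to read the relation off directly from the configuration in Figure~\ref{fig:keepsmallerangle}, but the algebraic route is cleaner and I would adopt it.
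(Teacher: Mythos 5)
Your proof is correct and follows exactly the paper's argument: both rewrite the two conditions as $\varangle a + \varangle b \ge \pi$ and $\varangle c + \varangle d \le \pi$, and derive their equivalence from the angle-sum identity $\varangle a + \varangle b + \varangle c + \varangle d = 2\pi$. Your additional remark that convexity guarantees the identity (and that each angle lies in $(0,\pi)$) is a minor elaboration the paper leaves implicit.
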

\begin{proof}
	Since $\varangle a + \varangle b + \varangle c + \varangle d = 2\pi$,
	$\varangle a + \varangle b \ge \pi \iff \varangle c + \varangle d \le \pi$,
	which proves the claim of the lemma.
\end{proof}
The preservation of inequality expressed in Figure~\ref{fig:keepsmallerangle}, follows
directly from the lemma.
\begin{figure} [!htb]
	\centering
	\begin{subfigure}[b]{0.4\textwidth}
		\includegraphics[scale=0.65]{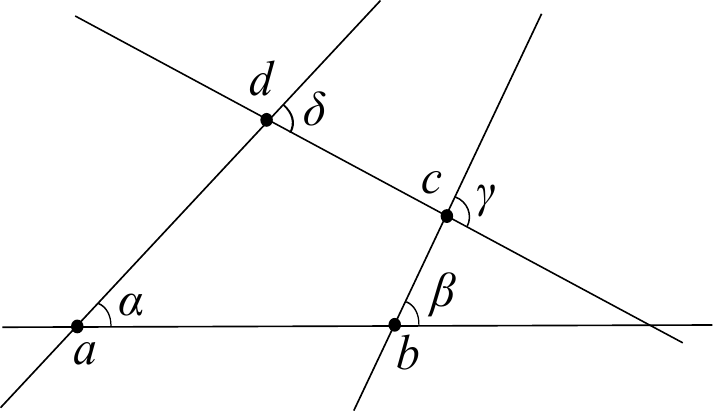} 
	\end{subfigure}
	\caption{Preserving the inequality $\alpha \le \beta \Rightarrow \delta\le \gamma$.}
	\label{fig:keepsmallerangle}
\end{figure}

Before proceeding we introduce more notation. Let $P_\omega = P_0
\circledcirc_\omega P_1 = (p_\omega, n_\omega)$. We denote by 
$l_\omega$ the line through
$p_\omega$ in direction $n_\omega$, and by $|\alpha_\omega|$ the angle
between the vectors $n_\omega$ and $\overrightarrow{p_0p_1}$. 
Note that $ \ 0 \le 
|\alpha_\omega| \le \pi.$ We introduce the convention that $\alpha_\omega > 0$
($\alpha_\omega < 0$) if $n_\omega$ is to the left (right) of
$\overrightarrow{p_0p_1}$, when both vectors are anchored in the same point.

We now prove the consistency property in case the two normals are in the same half-plane relative to $p_0p_1$ or equivalently that $\alpha_0\alpha_1 > 0$.
First, we show
\begin{figure} %[!htb]
	\centering
		\includegraphics[scale=0.65]{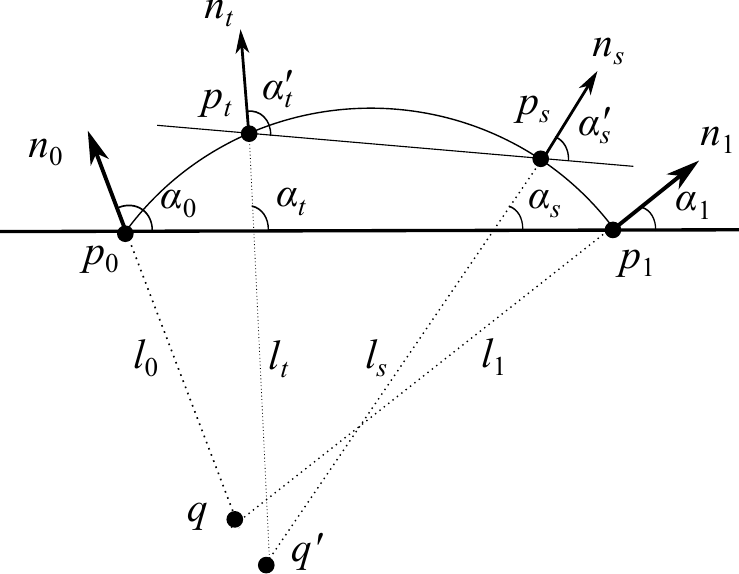} 
		\caption{The setup of Theorem~\ref{theorem:samenormhalfplane}.}
	\label{fig:consistency2}
\end{figure}
\begin{theorem}
	Let $n_0$ and $n_1$ be in the same half-plane relative to $p_0p_1$,
	and let $t,s \in [0,1]$, be such that $t < s$.
	Then,
	\[
	\bigfrown{P_t \circledcirc P_s} \subset \ \ \bigfrown{P_0 \circledcirc P_1} \ \ .\]  
	\label{theorem:samenormhalfplane}
\end{theorem}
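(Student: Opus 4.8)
The plan is to reduce the theorem to a single statement about the Selection Criterion of Step~3 and then let Lemma~\ref{lemma:createsamecircle} do the rest. By that lemma the circle $C$ carrying $\bigfrown{P_0 \circledcirc P_1}$, with centre $o^*$ and radius $R$, is one of the two candidate circles for $\bigfrown{P_t \circledcirc P_s}$; the other candidate, call it $C'$, is the reflection of $C$ across the line $p_t p_s$. Since $p_t$ and $p_s$ sit on $\bigfrown{P_0 \circledcirc P_1}$ at central angles $t\theta$ and $s\theta$ from $p_0$, the short arc of $C$ between them subtends the central angle $(s-t)\theta < \theta < \pi$ and is exactly the portion of $\bigfrown{P_0 \circledcirc P_1}$ lying between $p_t$ and $p_s$. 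Hence the inclusion $\bigfrown{P_t \circledcirc P_s} \subset \bigfrown{P_0 \circledcirc P_1}$ holds as soon as I show that the Selection Criterion, applied to the pair $(P_t,P_s)$, returns $C$ and not the mirror circle $C'$. Everything below is aimed at this one point.

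To run the criterion for $(P_t,P_s)$ I must decide two things: whether $n_t$ and $n_s$ lie in the same or in different half-planes relative to $p_t p_s$, and on which side of $p_t p_s$ the point $q' = l_t \cap l_s$ falls. The structural fact that makes both decisions tractable is that the normal rotates uniformly: since $n_\omega = GA(n_0,n_1;\omega)$, the signed inclination $\alpha_\omega$ of $n_\omega$ to $\overrightarrow{p_0p_1}$ is affine in $\omega$, while the central angle $\varangle p_0 o^* p_\omega = \omega\theta$ advances at the same rate. Consequently $n_\omega$ keeps a \emph{constant} signed angle with the radius $o^* p_\omega$ of $C$, a fact I would isolate as a short preliminary observation from Steps~4 and~5 and the definition (\ref{eq:geoavg}) of the geodesic average. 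The pay-off is a monotonicity in the subtended angle: the property that the two normals fall on one side of their chord is governed by the half-central-angle of that chord, which for $(P_t,P_s)$ equals $(s-t)\theta/2$ and is strictly smaller than the value $\theta/2$ attached to the original chord $p_0p_1$.

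With this in hand I would invoke the hypothesis together with Lemma~\ref{lemma:keepsmallerangle}. The assumption that $n_0,n_1$ lie in the same half-plane relative to $p_0p_1$ is precisely the statement that, for the original chord, the two normals fall on one side; by the monotonicity just described the same conclusion is inherited by the shorter chord $p_tp_s$, so $(P_t,P_s)$ is again a same-half-plane instance and the criterion will select the candidate arc \emph{opposite} to $q'$. It then remains to place $q'$: I claim it lies on the same side of $p_tp_s$ as $o^*$, i.e.\ opposite to the short arc of $C$. This is where the quadrilateral lemma enters. Writing the inclinations of $l_t,l_s$ to $p_tp_s$ as the angles of the relevant convex quadrilateral and comparing them, through Lemma~\ref{lemma:keepsmallerangle}, to the inclinations of $l_0,l_1$ to $p_0p_1$, the inequality that puts $q$ on the $o^*$-side of $p_0p_1$ (which is exactly what the original same-half-plane selection records) is preserved under passage to the smaller central angle $(s-t)\theta<\theta$ and places $q'$ on the $o^*$-side of $p_tp_s$. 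The criterion then selects the arc opposite $q'$, namely the short arc of $C$, which completes the argument.

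The hard part, and the only place where real care is needed, is exactly this last claim — that $q'$ lands on the $o^*$-side of $p_tp_s$ — since it is here that the mirror circle $C'$ must be ruled out. Everything else (Lemma~\ref{lemma:createsamecircle}, the containment of the short arc, and the affine behaviour of $\alpha_\omega$) is essentially bookkeeping. I would therefore devote the bulk of the write-up to identifying the convex quadrilateral to which Lemma~\ref{lemma:keepsmallerangle} applies and checking that its hypothesis is the same-half-plane assumption for $(P_0,P_1)$ while its conclusion is the desired side of $q'$. Finally I would dispose of the degenerate configurations permitted by the construction — the endpoints $t=0$ or $s=1$, a chord $p_tp_s$ parallel to a normal, in which case $q'$ escapes to infinity and the parallel conventions of special case (iii) apply, and the collinear limit $\theta\to 0$ — each of which is immediate once the generic case is settled.
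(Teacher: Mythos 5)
Your proposal is correct and follows essentially the same route as the paper's proof: reduce everything to which candidate circle the Selection Criterion picks for $(P_t,P_s)$ via Lemma~\ref{lemma:createsamecircle}, use the affine dependence of the signed inclination $\alpha_\omega$ on $\omega$ (your uniform-rotation observation), and transfer the resulting inequality from the chord $p_0p_1$ to the chord $p_tp_s$ via Lemma~\ref{lemma:keepsmallerangle} to place $q'$ on the side opposite the normals. The paper compresses your two separate checks (that $n_t,n_s$ lie on the same side of $p_tp_s$, and that $q'$ lies on the $o^*$-side) into the single inference $\alpha_t>\alpha_s \Rightarrow \alpha'_t>\alpha'_s \Rightarrow n_t,n_s\notin HP(q';p_tp_s)$ after the w.l.o.g.\ normalization $\alpha_0>\alpha_1>0$, but the ingredients and logic are the same.
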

\begin{proof}
	W.l.o.g., assume that $\alpha_0 > \alpha_1 > 0$ 
	(see Figure~\ref{fig:consistency2}).
	This assumption guarantees that $n_0, n_1 \notin HP(q;p_0p_1)$.
	Since the vectors $n_0$ and $n_1$ are in the same 
	half-plane relative to $p_0p_1$ the candidate arc in this half-plane 
	is selected by the selection criterion. 
	\\
	According to Lemma~\ref{lemma:createsamecircle}, 
	the circle containing \bigfrown{P_0 \circledcirc P_1} is considered as a candidate
	for $\bigfrown{P_t \circledcirc P_s}$. By definition of $n_t$
	\[ \alpha_t = (1-t)\alpha_0 + t\alpha_1, \ \alpha_s = (1-s)\alpha_0 + s\alpha_1 \]
	Since $t < s$ and $\alpha_0 > \alpha_1$, we obtain $\alpha_t > \alpha_s$.
	\\
	Let $\alpha'_t \ (\alpha'_s)$ be the angle between $p_tp_s$ and $l_t \ (l_s)$,
	and let $q'$ be the intersection point between $l_t$ and $l_s$.
	By Lemma~\ref{lemma:keepsmallerangle}, $\alpha_t > \alpha_s \Rightarrow \alpha'_t > \alpha'_s$. Therefore $n_t, n_s \notin HP(q';p_tp_s)$, implying that
	$\bigfrown{P_t \circledcirc P_s} \subset \ \ \bigfrown{P_0 \circledcirc P_1} \ \ $.
\end{proof}
To prove (\ref{eq:consistency}) it remains to show that for $P_t\circledcirc_k P_s =
(\tilde{p},\tilde{n})$, $\varangle \tilde{p}o^*p_0 = \omega^*\theta$, and 
$\theta(n_0, \tilde{n}) = \omega^*\theta$. Indeed
\[\varangle \tilde{p}o^*p_0 = \varangle p_to^*p_0 + k\varangle p_to^*p_s=
t\theta+k(s-t)\theta = \omega^*\theta,\]
and similarly
\[\theta(n_0,\tilde{n}) = \theta(n_0,n_t) + k\theta(n_t,n_s) = \omega^*\theta.\]

%------------------------------------------------------------------------------

Next we discuss the case when the normals $n_0, n_1$ are 
in different half-planes relative to $p_0p_1$. 
\begin{figure}[!htb]
	\centering
	\includegraphics[scale=0.65]{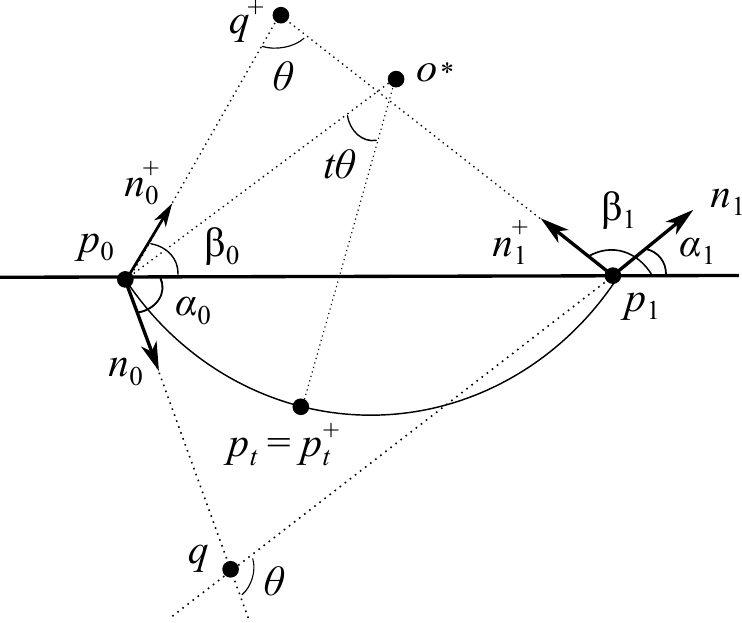} 
	\caption{The setup of Theorem~\ref{theorem:diffnormhalfplane}.}
	\label{fig:consistency3}
\end{figure}
\begin{theorem}
	For $n_0$ and $n_1$ in different half-planes relative to $p_0p_1$,
	the consistency, as defined in (\ref{eq:consistency}) holds.
	\label{theorem:diffnormhalfplane}
\end{theorem}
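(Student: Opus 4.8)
The plan is to follow the skeleton of the proof of Theorem~\ref{theorem:samenormhalfplane}. Once the inclusion $\bigfrown{P_t \circledcirc P_s} \subset \bigfrown{P_0 \circledcirc P_1}$ is established, the two angle identities displayed after that theorem, $\varangle \tilde p o^* p_0 = t\theta + k(s-t)\theta = \omega^*\theta$ and $\theta(n_0,\tilde n) = t\theta + k(s-t)\theta = \omega^*\theta$, use only the central angle along the common circle and the additivity of the geodesic average, so they carry over verbatim and yield $P_t \circledcirc_k P_s = P_0 \circledcirc_{\omega^*} P_1$. I would therefore reduce the whole statement, as before, to this single inclusion. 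By Lemma~\ref{lemma:createsamecircle} the circle carrying $\bigfrown{P_0 \circledcirc P_1}$, with centre $o^*$, is a candidate circle for $\bigfrown{P_t \circledcirc P_s}$; and since candidate arcs are the short arcs, the relevant candidate on this circle is precisely the sub-arc of $\bigfrown{P_0 \circledcirc P_1}$ from $p_t$ to $p_s$, which lies on the side of the chord $p_tp_s$ opposite to $o^*$. Hence the task collapses to showing that the Selection Criterion, applied to $(P_t,P_s)$, returns the candidate arc on the side of $p_tp_s$ away from $o^*$.

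I would set coordinates with $\overrightarrow{p_0p_1}$ along the positive axis and assume w.l.o.g. $\alpha_0 > 0 > \alpha_1$. Then $\bigfrown{P_0 \circledcirc P_1}$ is the short arc on the same side of $p_0p_1$ as $q = l_0 \cap l_1$, and since this arc is also the side away from $o^*$, we know at the outset that $q$ lies on the side of $p_0p_1$ opposite to $o^*$. The goal is the analogous statement for the sub-pair: to locate $q' = l_t \cap l_s$ relative to $o^*$ and $p_tp_s$. The essential new difficulty, absent from Theorem~\ref{theorem:samenormhalfplane}, is that the Selection Criterion for $(P_t,P_s)$ is read off the chord $p_tp_s$, whose direction $\beta = \arg\overrightarrow{p_tp_s}$ differs from that of $p_0p_1$. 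Because $\alpha_\omega = (1-\omega)\alpha_0 + \omega\alpha_1$ is linear and changes sign at a unique $\omega_* \in (0,1)$ where $n_{\omega_*} \parallel p_0p_1$, the signs of $\alpha_t,\alpha_s$ are controlled by the position of $t,s$ relative to $\omega_*$; but the branch of the Selection Criterion is decided instead by the signs of $\alpha_t-\beta$ and $\alpha_s-\beta$, and the tilt of the chord can make $n_t,n_s$ share a half-plane with respect to $p_tp_s$ while lying in opposite half-planes with respect to $p_0p_1$, and conversely.

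Accordingly I would, in each configuration, first read off from $\alpha_t-\beta$ and $\alpha_s-\beta$ which branch of the Selection Criterion is active, and then locate $q'$. The transfer of the relevant angle inequality from the baseline $p_0p_1$ to the chord $p_tp_s$ is exactly what Lemma~\ref{lemma:keepsmallerangle} supplies, applied to the convex quadrilateral cut out by the two normal lines together with $p_0p_1$ and $p_tp_s$, as in Theorem~\ref{theorem:samenormhalfplane} but now with the signs tracked through the sign change of $\alpha_\omega$ at $\omega_*$. The uniform conclusion to be reached is: when $n_t,n_s$ lie on opposite sides of $p_tp_s$, the point $q'$ lies on the side away from $o^*$, whereas when they lie on the same side, $q'$ lies on the $o^*$ side. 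In the first situation the Criterion selects the $q'$-side arc, in the second it selects the other arc; either way the selected arc is the short arc on the side away from $o^*$, i.e. the sub-arc of $\bigfrown{P_0 \circledcirc P_1}$. Degenerate boundary situations, such as $t$ or $s$ equal to $\omega_*$, or a normal parallel to $p_tp_s$, I would absorb into the parallel convention~(iii) of the construction and into the continuity of the average in its data.

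The step I expect to be the main obstacle is precisely this location of $q'$ relative to $o^*$ for every admissible tilt of the chord: one must check that across the sign change of $\alpha_\omega$ the active branch of the Selection Criterion always returns the short arc away from $o^*$ and never its mirror image on the reflected candidate circle. A structural fact I would use to pin $q'$ down is that all the normal lines $l_\omega$ make one and the same angle with the radii of the common circle — because $n_\omega$ and the radius to $p_\omega$ both turn by $\omega\theta$ as $\omega$ runs over $[0,1]$ — so that $l_t$ and $l_s$ are two tangents to a fixed circle concentric with $o^*$, and the position of their meeting point $q'$ is thereby tied to $o^*$.
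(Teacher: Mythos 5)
Your reduction is sound: granting the inclusion $\bigfrown{P_t \circledcirc P_s} \subset\ \ \bigfrown{P_0 \circledcirc P_1}$, the two angle identities after Theorem~\ref{theorem:samenormhalfplane} carry over verbatim, and you have correctly translated the inclusion into a statement about which branch of the Selection Criterion fires for $(P_t,P_s)$ and where $q'=l_t\cap l_s$ must lie relative to $o^*$. But the proposal stops exactly there. The ``uniform conclusion to be reached'' --- $q'$ on the far side of $p_tp_s$ from $o^*$ when $n_t,n_s$ straddle that chord, on the $o^*$ side when they do not --- is stated, never proved; you yourself flag it as the main obstacle. The tool you offer to close it does not do so as stated: the claim that every $l_\omega$ makes one and the same angle with the radius $\overrightarrow{o^*p_\omega}$ is true, but your justification (``$n_\omega$ and the radius both turn by $\omega\theta$'') ignores orientation. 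Equal turning amounts give a constant angle only if the two rotations have the same \emph{sense}, and in the different-half-planes configuration that sense agreement is precisely what the Selection Criterion must be shown to deliver --- had the criterion selected the mirror candidate arc, the radius would turn opposite to the normal. So the constant-angle fact is not available as an independent input; proving it is essentially the theorem itself, and as written your argument is a program whose load-bearing step is missing, with a risk of circularity.

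The paper's proof avoids your chord-tilt sign analysis entirely by a substitution trick, which is the idea missing from your plan. Keeping $p_0,p_1$, it replaces the normals by auxiliary ones $n_0^+,n_1^+$ lying in the \emph{same} half-plane with $\theta(n_0^+,n_1^+)=\theta$, chosen so that the intersection point $q^+$ lies in the half-plane opposite to $q$; the Selection Criterion then gives $\bigfrown{P_0^+ \circledcirc P_1^+} =\ \ \bigfrown{P_0 \circledcirc P_1}$. Since the point part of the average depends on the normals only through the selected arc and the central angle, $p_\omega^+=p_\omega$ for every $\omega$, while by (\ref{eq:geoavg}) $\theta(n_t,n_s)=\theta(n_t^+,n_s^+)=(s-t)\theta$, so the sub-pair arcs coincide: $\bigfrown{P_t \circledcirc P_s} =\ \ \bigfrown{P_t^+ \circledcirc P_s^+}$. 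Theorem~\ref{theorem:samenormhalfplane} applied to the $+$ data then yields the inclusion, and the angle computation finishes (\ref{eq:consistency}). To salvage your route you would have to actually carry out the case analysis you sketch, or first establish the constant-angle (same-sense) fact for $(P_0,P_1)$ directly in each branch of the criterion and only then derive the position of $q'$ from the tangency picture; until one of these is done, there is a genuine gap.
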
 
\begin{proof}
	We assume w.l.o.g. that $n_0 \in HP(q;p_0p_1)$ and $n_1\notin HP(q;p_0p_1)$, 
	namely that $\alpha_0 <0,\ \alpha_1 > 0$, and that $\pi - |\alpha_0| > \alpha_1$ (see Figure~\ref{fig:consistency3}).	

	We take $\beta_0,\beta_1$ such that $0<\beta_0<\pi-\theta, $ and $\beta_1 = \beta_0 + \theta < \pi$, and define normal vector $n_i^+$ 
	such that $\theta(n_i^+,\overrightarrow{p_0p_1}) = \beta_i,$ and a pair
	$P_i^+ = (p_i, n_i^+),$ for $i = 0,1.$ Note that $\theta(n_0^+,n_1^+) = \theta = \theta(n_0,n_1)$.

	Let $q^+$ be the intersection point of the two lines defined for $i=0,1$
	by the vector $n_i^+$ and passing through the point $p_i$.
	By the choice of $\beta_0$ and $\beta_1$, we have $HP(q^+; p_0p_1) \neq HP(q;p_0p_1)$. Thus, according to the selection criterion, 
	$\bigfrown{P_0^+ \circledcirc P_1^+} = \ \ \bigfrown{P_0 \circledcirc P_1} \ \ $.

	Let $(p_\omega,n_\omega) = P_0 \circledcirc_\omega P_1$, and
	$(p_\omega^+,n_\omega^+) = P_0^+ \circledcirc_\omega P_1^+$. 
	By the definition of the average, $\varangle p_0o^{*}p_\omega = 
	\omega \theta, \varangle p_0o^{*}p_\omega^+ = 
	\omega \theta$, and since $p_\omega$ and $p_\omega^+$ are on $\bigfrown{P_0 \circledcirc P_1}$, they are equal.

	Let $0 <t <s< 1$. By the above discussion $p_t = p_t^+, \ p_s = p_s^+$,
	while 
	$n_t \neq n_t^+, n_s \neq n_s^+$. By (\ref{eq:geoavg}), we have
	\[\theta(n_t, n_s) = \theta(n_t^+, n_s^+) = (s-t)\theta.\]
	Thus, $\bigfrown{P_t \circledcirc P_s} = \ \ \bigfrown{P_t^+ \circledcirc P_s^+} \ \ $, and according to Theorem \ref{theorem:samenormhalfplane}
	\[ \bigfrown{P_t \circledcirc P_s} = \ \
	   \bigfrown{P_t^+ \circledcirc P_s^+} \ \ \subset
	   \ \ \ \ \bigfrown{P_0^+ \circledcirc P_1^+} \ \ \ \ = 
	   \ \ \ \ \ \ \bigfrown{P_0 \circledcirc P_1} \ \ \ \ \ \ .\]
	The rest of the proof of (\ref{eq:consistency}) is as in the proof of
	Theorem~\ref{theorem:samenormhalfplane}.
\end{proof}
Finally, we conclude from Theorem~\ref{theorem:samenormhalfplane} 
and Theorem~\ref{theorem:diffnormhalfplane},
\begin{cor}
	The consistency property holds regardless 
	of the location of the normals relative to $p_0p_1$.
\end{cor}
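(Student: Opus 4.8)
The plan is to prove the Corollary as a case analysis on the sign of the product $\alpha_0\alpha_1$, where $\alpha_i$ is the signed angle between $n_i$ and $\overrightarrow{p_0p_1}$, dispatching each case to one of the two theorems already established. When $\alpha_0\alpha_1 > 0$ the two normals lie in the same half-plane relative to $p_0p_1$, and Theorem~\ref{theorem:samenormhalfplane}, together with the angle computation $\varangle \tilde p\,o^*p_0 = \omega^*\theta$ and $\theta(n_0,\tilde n) = \omega^*\theta$ carried out immediately after its proof, yields (\ref{eq:consistency}). When $\alpha_0\alpha_1 < 0$ the normals lie in different half-planes, and Theorem~\ref{theorem:diffnormhalfplane} delivers (\ref{eq:consistency}) directly. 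Since the average (and hence $\alpha_0,\alpha_1$) is defined only for $\theta \neq \pi$, these two theorems cover every configuration in which both normals are transversal to the segment.

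The remaining work is to verify that the boundary and degenerate configurations are subsumed as well. First I would treat $\alpha_0\alpha_1 = 0$, i.e. at least one normal parallel to $p_0p_1$: here the conventions of special case (iii) declare both normals to lie in the same half-plane, so the configuration is reclassified and falls under Theorem~\ref{theorem:samenormhalfplane}. Next I would dispose of $\theta = 0$, i.e. $n_0 = n_1$: by special case (i) every average lies on $[p_0,p_1]$ with $p_\omega = (1-\omega)p_0 + \omega p_1$ and $n_\omega = n_0$, so (\ref{eq:consistency}) reduces to the elementary consistency of affine combinations of collinear points, with $\tilde n = n_0$. Finally the case $p_0 = p_1$ of special case (iv) leaves $\tilde p = p_0$ trivially consistent, while the normals obey (\ref{eq:consistency}) because $GA$ acts affinely on the defining angle in (\ref{eq:geoavg}), so the repeated geodesic average with weights $t,s,k$ coincides with the single geodesic average of weight $\omega^* = ks + (1-k)t$.

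Once the classification is fixed, the computations are routine and I expect no serious obstacle in the body of the argument. The one point requiring care is the exhaustiveness claim: I must confirm that the trichotomy on $\operatorname{sign}(\alpha_0\alpha_1)$, after folding in the parallel-normal conventions of (iii) and the degenerate cases (i) and (iv), genuinely covers every admissible input, with the single excluded input $\theta = \pi$ being precisely the one for which the average is undefined. Establishing this exhaustiveness, rather than any new geometric estimate, is the crux of the Corollary.
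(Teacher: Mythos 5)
Your proposal is correct and follows essentially the same route as the paper: the paper's corollary is obtained precisely by dispatching the two half-plane configurations to Theorem~\ref{theorem:samenormhalfplane} (with the angle computation following its proof) and Theorem~\ref{theorem:diffnormhalfplane}. Your explicit verification of the degenerate cases (parallel normals via special case (iii), $\theta=0$, and $p_0=p_1$) is a careful supplement that the paper leaves implicit, but it does not change the underlying argument.
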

The consistency property of the operation $P_0 \circledcirc_\omega P_1, \ \omega 
\in [0,1]$ guarantees that it is a weighted binary average and allows to 
extend it for weights outside [0,1].
\\
Let $\omega^- <0$ and $\omega^+>1$. For $\omega^-$ we extend the arc
$\bigfrown{P_0 \circledcirc P_1}$ on the selected circle outward $p_0$, 
such that $\varangle p_{\omega^-}o^*p_0= |\omega^-|\theta$, and similarly, 
for $\omega^+$ we extend the arc outward $p_1$ such that $\varangle p_{\omega^+}o^*p_0
= \omega^+\theta$. The computation
of the normal is done by (\ref{eq:geoavg}).
See Figure \ref{fig:outerweights} for examples. 
\begin{figure}[!htb]
	\centering
	\includegraphics[scale=0.65]{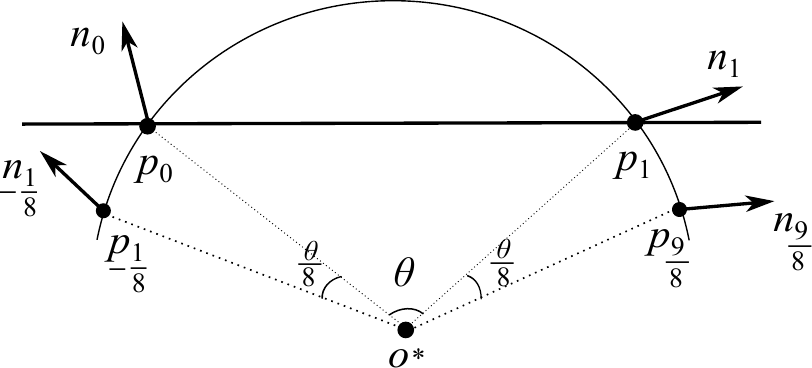} 
	\caption{Construction of the circle average with $\omega=-\frac{1}{8}$,
		$\frac{9}{8}$.}
	\label{fig:outerweights}
\end{figure}

It is easy to see that this extension is well defined for values of $\omega$ close to [0,1].

%-------------------------------------------------------------------------------
\pagebreak
\subsection{The arc $\bigfrown{P_0 \circledcirc P_1}$ as an approximation tool 
for curves}
In this subsection we compare the quality of the arc $P_0 \circledcirc P_1$
as an approximation tool for curves with that of the optimal arc approximating
curves in the least-squares sense. We expect the arc $\bigfrown{P_0 \circledcirc P_1}$ to approximate well short pieces of smooth curves, in analogy to the approximation capabilities of cubic Hermite interpolation \cite[Chapter~6]{Conte:1980:ENA:578374}.

Given a parametric curve $\Gamma(t)$, it is sampled at $\{t_i\}_{i=0}^{100}$
with $t_{i+1} - t_i = h>0$ and also its two normals $n_0$ and $n_{100}$ are 
sampled at $\Gamma(t_0)$ and $\Gamma(t_{100})$ . 
We solve the optimization problem of finding the circle $c_{opt}$ minimizing 
the sum of squares of distances 
to the input points. Next we construct the arc  $a_\Gamma = \bigfrown{P_0 \circledcirc P_{100}}$, where $P_i = (\Gamma(t_i), n_i)), i=0,100$. 
For every given $\Gamma(t_i)$, we find the nearest point on $c_{opt}$ and on 
$a_\Gamma$, and measure the distances to these points, denoted by
$\rho_i = dist(\Gamma(t_i), c_{opt}), \varrho_i = dist(\Gamma(t_i), a_\Gamma)$.
The next table presents values of two measures of the quality of the 
approximation  of three analytic curves by $c_{opt}$ and $a_\Gamma$ in two parametric intervals.
\begin{center}
	\begin{tabular}{ |c|c|c|c|c|c|c| } 
		\hline
		& & & & & & \\ 
		curve & $t_0$&$t_{100}$&
		$\max\limits_{0\leq i \leq100}{\rho_i}$& 
		$\max\limits_{0\leq i \leq100}{\varrho_i}$&
		$\frac{1}{101}\sum\limits_{i=0}^{100}\rho_i$ & 
		$\frac{1}{101}\sum\limits_{i=0}^{100}\varrho_i$\\ 
		& & & & & & \\ 
		\hline
		& & & & & & \\ 
$x(t) = \ 2\cos t $ & $\frac{5}{8}\pi$ & $\pi$ & 0.04145 & 0.05984 & 0.01315 & 0.02909 \\
		& & & & & & \\ 
$y(t)= \ \sin t $ & $\frac{12}{16}\pi$ & $\frac{15}{16}\pi$ & 0.00580 & 0.00710 & 0.00193 & 0.00377  \\
		& & & & & & \\ 
		\hline
		& & & & & & \\ 
$x(t)= t\cos t$   & $\frac{10}{8}\pi$ & $\frac{17}{8}\pi$ & 0.20098 & 0.28437 & 0.06613 & 0.14787 \\
		& & & & & & \\ 
$y(t) = t\sin t$ &$\frac{24}{16}\pi$ & $\frac{31}{16}\pi$  & 0.02337 & 0.02643 & 0.00794 & 0.01530 \\
		& & & & & & \\ 
		\hline
		& & & & & & \\ 

$x(t) = t^3-3t$ & 0 & $\frac{6}{8}\pi$ & 1.46814 & 1.97726 & 0.49597 & 1.02364\\	
		& & & & & & \\ 
$y(t) = t^2 - 1$&$\frac{3}{16}\pi$ & $\frac{9}{16}\pi$ & 0.25617 & 0.32838 & 0.09297 & 0.17556\\
		& & & & & & \\ 

		\hline
	\end{tabular}
	\captionof{table} {$c_{opt}$ vs. $a_\Gamma$}
	\label{tbl:opticirc}
\end{center}

The examples in Table~\ref{tbl:opticirc} demonstrate that $\bigfrown{P_0
\circledcirc P_1}$ can serve as an approximating tool in scenarios when the
sampling is expensive and/or when the computation time is critical. 
Moreover, the quality of the approximation by $\bigfrown{P_0 \circledcirc P_1}$
increases as the length of the interval of the parameter $t$ decreases.
This observation points to the advantage of approximating a curve by
piecewise arcs, and to the possibility of using the circle average in subdivision
schemes refining point-normal pairs.
%-------------------------------------------------------------------------------
\section{Subdivision schemes with circle averages}
\label{sec:subdivision}
In this section we consider subdivision schemes refining point-normal
pairs, which are obtained from converging linear subdivision schemes.
To obtain these schemes we express the linear schemes in terms of repeated
binary averages of points and replace these averages by the circle average.
We term the so obtained schemes "Modified schemes".

It is easy to verify that any modified subdivision scheme
reconstructs circles, namely, if the initial data is sampled from a circle, 
the limit of the modified scheme is that circle.

The convergence of the modified schemes is proved in two parts, the 
convergence of the points and the convergence of the normals. The proof of 
the convergence of the points is based on the following result:
\addtocounter{ResultsCounter}{1}
\paragraph{Result\ \Alph{ResultsCounter}}
(\cite{ds:14}, Theorem 3.6) A subdivision scheme refining points converges for any initial data, if any sequence 
of control polygons $\big\{ \mathcal{P}^j=\{p^j_i:i\in\mathbb{Z}\}\big\}_{j\in\mathbb{N}_0}$ generated
by this scheme satisfies
\begin{itemize}
	\item $e^{j+1} \leq \eta e^j$, $\eta \in (0,1)$, where $e^j$ is the maximal length
	of an edge in $\mathcal{P}^j$ (contractivity with factor $\eta$).
	\item $|p^{j+1}_{2i} - p^j_i| \leq ce^j$, with $c > 0$ (safe displacement).
\end{itemize}
Since our proof of convergence depends on the modified
subdivision scheme, it is given after the scheme is presented.

%-------------------------------------------------------------------------------
\subsection{The modified Lane-Riesenfeld (MLR) algorithm}
To obtain the first class of subdivision schemes we substitute 
the arithmetic average by the circle average in 
the linear Lane-Riesenfeld algorithm (LLR)\cite{lr:80}, obtaining the Modified Lane-Riesenfeld (MLR) algorithm, presented in Algorithm 1.
\begin{figure} %[!htb]
	\vspace{-20px}
	\begin{subfigure}[b]{0.4\textwidth}
		\centering
		\includegraphics[scale=0.25]{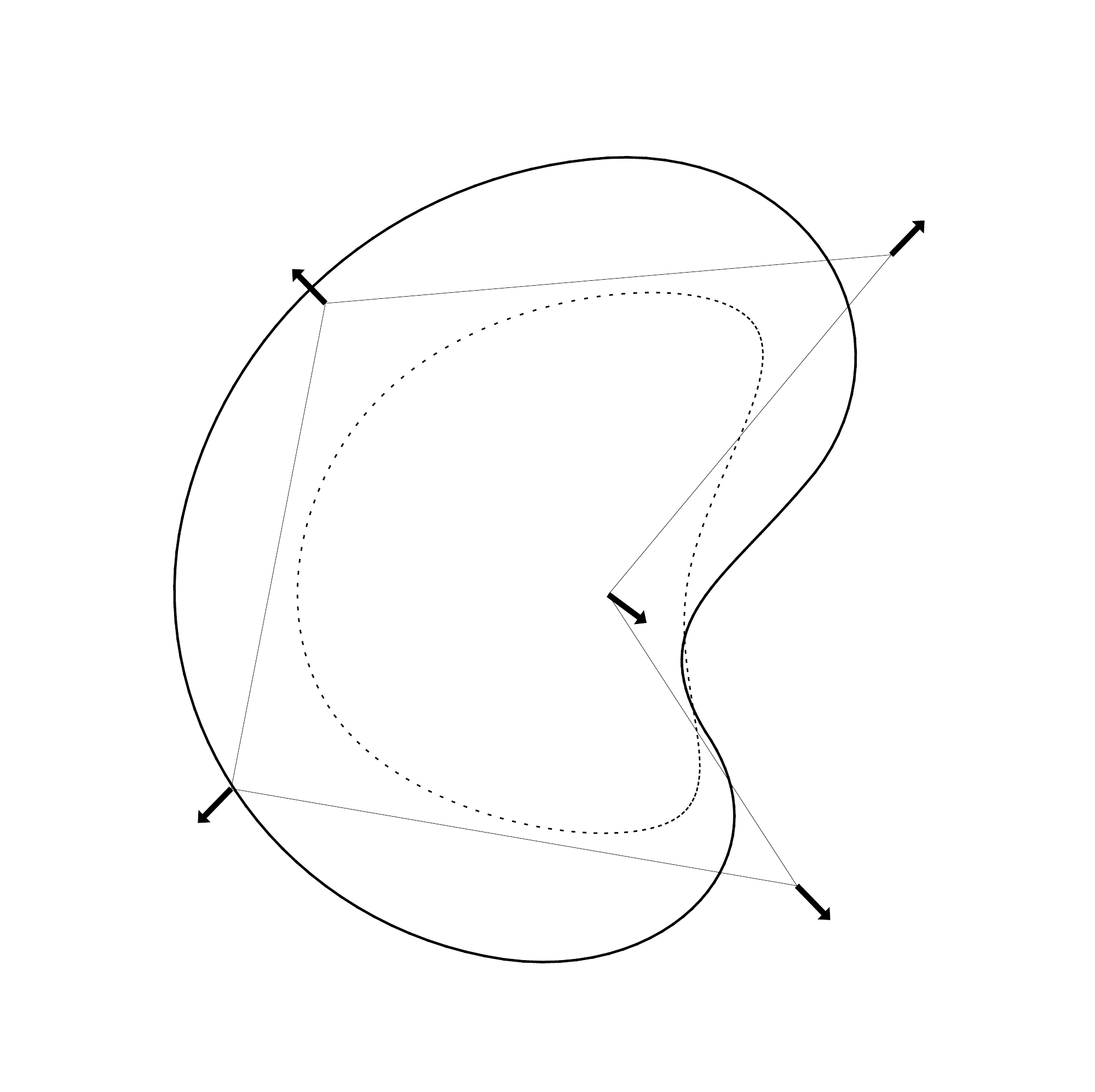} 
	\end{subfigure}
	\begin{subfigure}[b]{0.4\textwidth}
		\centering
		\includegraphics[scale=0.25]{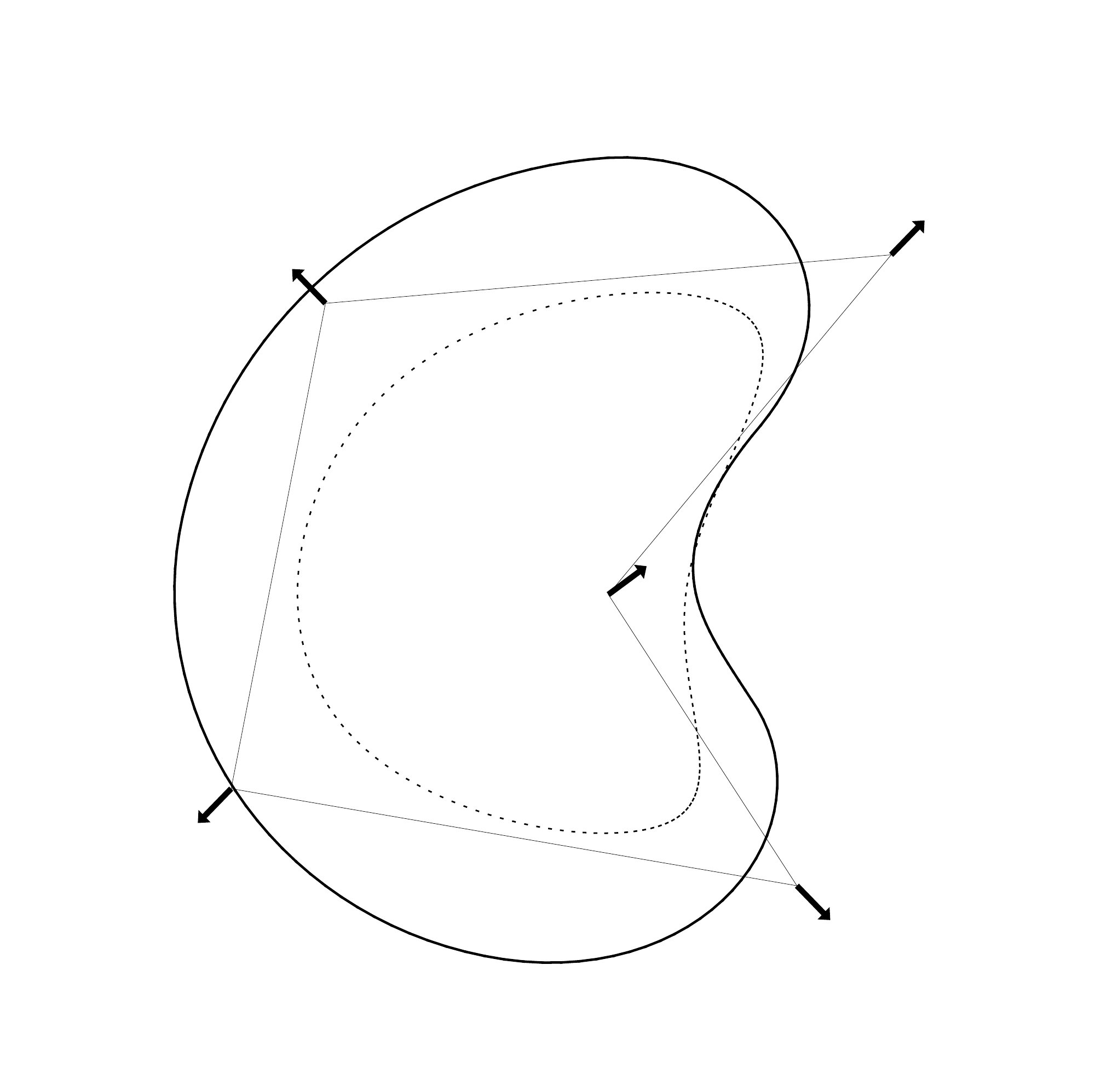} 
	\end{subfigure}
	\begin{subfigure}[b]{0.4\textwidth}
		\centering
		\includegraphics[scale=0.25]{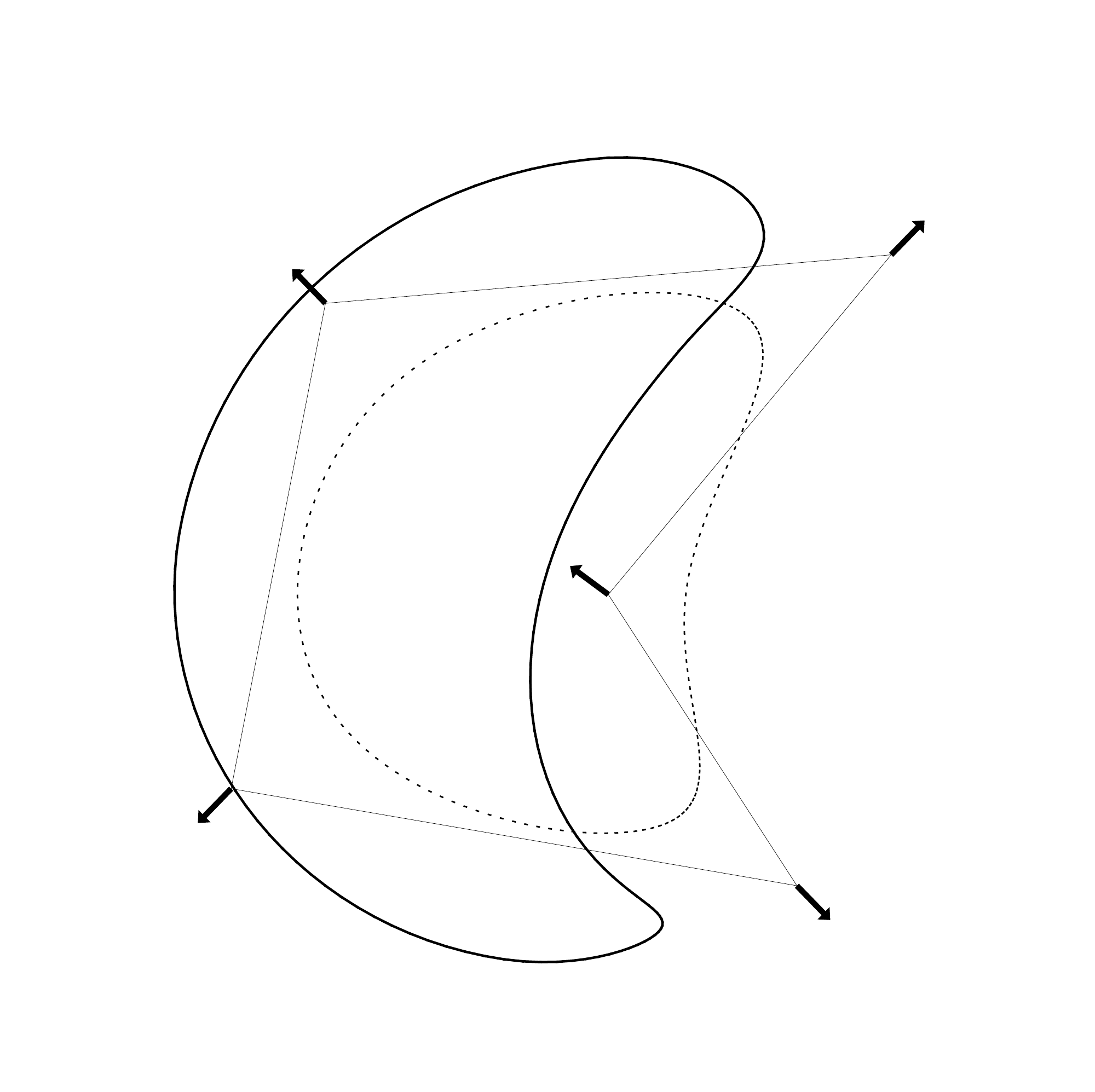} 
	\end{subfigure}
	\qquad\qquad\qquad
	\begin{subfigure}[b]{0.4\textwidth}
		\centering
		\includegraphics[scale=0.25]{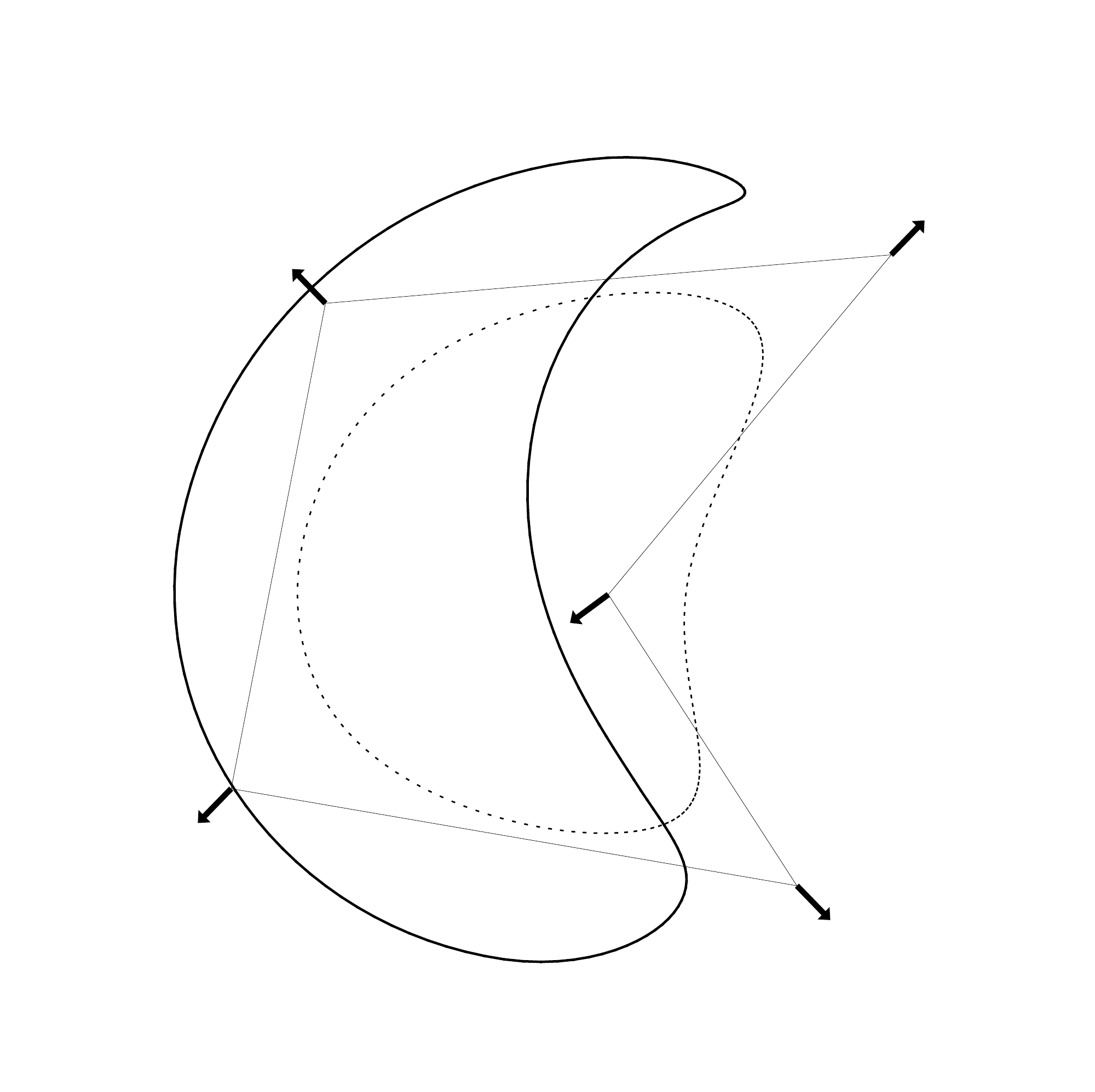} 
	\end{subfigure}
	\captionsetup{justification=centering}
	\caption{Editing capabilities of the MLR with $m=3$ by a change of 
		one initial normal. \\bold: MLR curve, dots: LLR curve.}
	\label{fig:mlr_changenorm}
\end{figure}
\begin{algorithm}
\caption{MLR}\label{algo:mlr}
\textbf{Input:} $m \in \mathbb{N}_0, \ P_i = (p_i,n_i),\ i\in \mathbb{Z}.$
\begin{algorithmic}
\For{$i \in \mathbb{Z}$}
	\State$P^0_i \gets P_i$
\EndFor
\For{j=1,2,\dots}
	\For{$i \in \mathbb{Z}$}
		\State $P^{j,0}_{2i}   \gets P^{j-1}_{i}$
		\State $P^{j,0}_{2i+1} \gets P^{j-1}_{i} \circledcirc_{\frac{1}{2}}
		 P^{j-1}_{i+1}$
        \rlap{\smash{$\left.\begin{array}{c@{}c@{}}\\{}\\{}\end{array}
        		       \right\}
		\begin{tabular}{l}elementary refinement\end{tabular}$}}
	\EndFor (i)
	\For{$k=1, \dots, m-1$}
		\For{$i \in \mathbb{Z}$}
			\State$P^{j,k}_{i} \gets P^{j,k-1}_{i} \circledcirc_{\frac{1}{2}} P^{j,k-1}_{i+1}$
            \rlap{\smash{$\left.\begin{array}{c@{}c@{}}\\{}\\{}\end{array}
            	\right\}
        		\begin{tabular}{l}smoothing step\end{tabular}$}}
		\EndFor (i)	
	\EndFor (k)
	\For{$i \in \mathbb{Z}$}
		\State$P^{j}_{i} \gets P^{j,m-1}_{i}$
        \rlap{\smash{$\left.\begin{array}{c@{}c@{}}\\{}\\{}\end{array}
       		\right\}
         		\begin{tabular}{l}result of current iteration\end{tabular}$}}
	\EndFor (i)
\EndFor (j)
\end{algorithmic}
\end{algorithm}

In Figure~\ref{fig:mlr_changenorm} we present curves generated by the MLR
algorithm with $m=3$ from the same initial data, but with one initial normal changed,
demonstrating the editing capabilities of the algorithm. For comparison we 
depict also the curves generated by the LLR algorithm.
%-------------------------------------------------------------------------------

\subsubsection{Convergence analysis}
First we prove the convergence of the points. Our analysis is based on 
Result A, which gives sufficient conditions for the convergence of a subdivision scheme refining points. These conditions in fact apply to any sequence of control polygons.

First, we introduce some additional notation related to the MLR algorithm.
For $k=0,...,m-1$ and $j\in\mathbb{N}_0$, $P^{j,k}_i = (p^{j,k}_i, n^{j,k}_i)$ 
and 
\begin{align}
\nonumber
e^{j,k} & = \max_{i\in\mathbb{Z}}{\{|p^{j,k}_i\ p^{j,k}_{i+1}|\}}, \\
\theta^{j,k} &= \max_{i\in\mathbb{Z}}{\{\theta(n^{j,k}_i,n^{j,k}_{i+1})\}}, \\
\nonumber
\mu^{j,k} &= \frac{1}{2\cos\frac{\theta^{j,k}}{4}}.
\label{eq:mlrdefs}
\end{align} 
We also define for $j\in\mathbb{N}_0$
\begin{align}
e^{j} = e^{j,m-1}, \ 
\theta^{j} = \theta^{j,m-1}, \
\mu^{j} = \mu^{j,m-1}.
\end{align}
Next we prove that the MLR satisfies the first condition of Result A
from a certain refinement level and on.
\begin{lemma}
	There exists $j^*\in \mathbb{N}_0$ such that the MLR algorithm is contractive in refinement levels above $j^*$,
	namely satisfies $e^{j+1} \leq \eta e^j$ with $\eta \in (0,1)$, for 
	$j \geq j^*$.
	\label{lemma:mlrcontractive}
\end{lemma}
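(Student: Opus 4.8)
The plan is to factor one MLR iteration into its constituent averaging passes and track two quantities across them: the maximal edge length $e^{j,k}$ and the maximal angle between consecutive normals $\theta^{j,k}$. The single elementary refinement bisects each edge by a circle average, contracting lengths by roughly $\tfrac12$, whereas each of the $m-1$ smoothing passes may expand lengths slightly; the whole point is to show the expansion factors tend to $1$ fast enough that the product over one iteration drops below $1$ for large $j$. Since the expansion in a smoothing pass is governed by how curved the arcs are, i.e.\ by $\theta^{j,k}$, I would first establish that the angles decay geometrically.

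For the angle decay I would use that every new normal is a geodesic average \eqref{eq:geoavg}, so passing to normals amounts to averaging their polar angles. In the elementary pass the inserted normal $n^{j,0}_{2i+1}=GA(n^{j-1}_i,n^{j-1}_{i+1};\tfrac12)$ sits at the angular midpoint, so each new consecutive angle is exactly half of an old one, giving $\theta^{j,0}=\tfrac12\theta^{j-1}$. In a smoothing pass the polar angle of $n^{j,k}_i$ is the mean of the polar angles of $n^{j,k-1}_i$ and $n^{j,k-1}_{i+1}$, so the new angular gap between neighbours is half of the gap spanned by two old edges, which by the triangle inequality is $\le\theta^{j,k-1}$; hence $\theta^{j,k}\le\theta^{j,k-1}$. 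Combining, $\theta^{j}\le\tfrac12\theta^{j-1}$, so $\theta^{j}\le 2^{-j}\theta^{0}\to0$. This also keeps every angle below the initial $\theta^{0}<\pi$, so the construction stays well defined throughout.

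Next I would bound the length change of each pass. For the elementary pass, the circle-average midpoint lies on an arc of radius $\tfrac{|p_0p_1|}{2\sin(\theta/2)}$ at central angle $\tfrac\theta2$ from $p_0$, so its chord distance to $p_0$ (and to $p_1$) is $\tfrac{|p_0p_1|}{2\cos(\theta/4)}$; taking maxima gives $e^{j,0}\le\mu^{j-1}e^{j-1}$. For a smoothing pass I would write each arc midpoint as the chord midpoint plus a sagitta of length $\tfrac{L}{2}\tan(\theta/4)$ perpendicular to the chord (which follows from the radius formula above). Expressing the new edge $p^{j,k}_{i+1}-p^{j,k}_i$ as $\tfrac12\bigl(p^{j,k-1}_{i+2}-p^{j,k-1}_i\bigr)$ plus the difference of two sagittae and applying the triangle inequality yields $e^{j,k}\le\bigl(1+\tan(\theta^{j,k-1}/4)\bigr)e^{j,k-1}$.

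Finally I would chain these estimates over one iteration. Since $\theta^{j,k-1}\le\theta^{j,0}=\tfrac12\theta^{j-1}$ for every smoothing pass, the product of factors gives
\[
e^{j}\ \le\ \frac{\bigl(1+\tan(\theta^{j-1}/8)\bigr)^{m-1}}{2\cos(\theta^{j-1}/4)}\,e^{j-1}\ =:\ \eta_j\,e^{j-1}.
\]
As $j\to\infty$ the angle decay forces $\eta_j\to\tfrac12<1$, so there is a $j^*$ with $\eta_j\le\eta<1$ for all $j>j^*$, which is the claimed contractivity. The main obstacle is precisely that the smoothing passes are, on their own, expansive, so $\eta_j$ can exceed $1$ at small $j$: contractivity is not step-by-step but only per full iteration, and it hinges on first proving $\theta^{j}\to0$ so that the smoothing expansion $\tan(\theta^{j-1}/8)$ eventually becomes negligible against the elementary contraction factor $\tfrac12$.
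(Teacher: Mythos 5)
Your proof is correct and takes essentially the same route as the paper: factor one MLR iteration into the elementary refinement plus $m-1$ smoothing passes, bound the edge growth of each pass in terms of the maximal angle between consecutive normals, prove the geometric decay $\theta^{j}\leq\tfrac12\theta^{j-1}$ from the geodesic averaging of normals, and conclude that the per-iteration factor tends to $\tfrac12<1$. The only (harmless) difference is technical: your sagitta decomposition gives the smoothing-pass factor $1+\tan(\theta/4)$, whereas the paper uses the triangle inequality through the old vertex to get $1/\cos(\theta/4)$; both tend to $1$ as the angles decay, so the conclusion is identical.
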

\begin{figure} [!htb]
	\begin{subfigure}[b]{0.5\textwidth}
		\centering
		\includegraphics[scale=0.65]{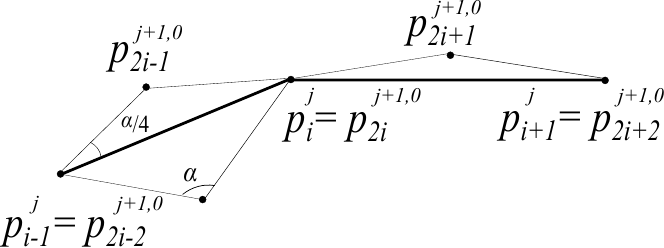} 
		\caption{$k=0$.}
	\end{subfigure}
	\begin{subfigure}[b]{0.5\textwidth}
		\centering
		\includegraphics[scale=0.65]{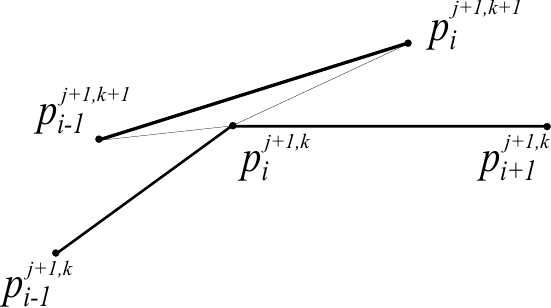} 
		\caption{$k=1,2,...,m-2$.}
	\end{subfigure}
	\captionsetup{justification=centering}
	\caption{The setup of Lemmas~\ref{lemma:mlrcontractive} and \ref{lemma:mlrdispsafe} }
	\label{fig:mlrconv}
\end{figure}
\begin{proof}
	Consider the pairs $\{P^{j,0}_{2i+1}\}_{ i \in \mathbb{Z}}$ inserted in 
	the elementary refinement step of the MLR algorithm. By the definition of 
	the circle average (see Figure~\ref{fig:mlrconv}a), we have
	\begin{align}
	|p^{j+1,0}_{2i}\ p^{j+1,0}_{2i+1}| = \frac{|p^{j}_i \ p^{j}_{i+1}|}{2\cos\Big(\frac{\theta(n^{j}_i,n^{j}_{i+1}) }{4}\Big)} \leq \frac{e^{j}}{2\cos\Big(\frac{\theta^{j}}{4}\Big)} 
	\leq \mu^{j} e^{j}.
	\label{ineq:insertbound}
	\end{align}
	Thus
	\begin{align}
	e^{j+1,0} \leq \mu^{j}e^{j}.
	\label{ineq:insertbound2}
	\end{align}
	In any smoothing step by the triangle
	inequality, and similar reasoning leading to (\ref{ineq:insertbound})
	(see Figure~\ref{fig:mlrconv}b), we have
	\[
	|p^{j,k+1}_{i}\ p^{j,k+1}_{i+1}|
	   \leq |p^{j,k+1}_{i}\ p^{j,k}_{i+1}| + |p^{j,k}_{i+1}\ p^{j,k+1}_{i+1}|
	   \leq 
	   \frac{e^{j,k}}{2\cos\Big(\frac{\theta^{j,k}}{4}\Big)} +
	   \frac{e^{j,k}}{2\cos\Big(\frac{\theta^{j,k}}{4}\Big)} 
	   \leq e^{j,k}\ 2\mu^{j,k}.
	\]
	Therefore
	\begin{align}
	e^{j,k+1}\leq e^{j,k}(2\mu^{j,k}), \ k = 0,\ldots,m-2.
	\label{ineq:nextsmoothiter}
	\end{align}
	Combining (\ref{ineq:nextsmoothiter}) and (\ref{ineq:insertbound2})
	we obtain
	\begin{align}
	\nonumber
	e^{j+1} = e^{j+1,m-1} \leq 2\mu^{j+1,m-2}e^{j+1,m-2} \leq \ldots 
	\nonumber
	\\ \leq (2\mu^{j+1, m-2})\ldots(2\mu^{j+1,0})e^{j+1,0}
	\nonumber
	\\ \leq \Big(\mu^{j} \prod_{k=0}^{m-2}{(2\mu^{j+1,k})} \Big)\ e^{j}
	\label{ineq:ejbound1}
	\end{align}
	Defining $\eta^{j+1} = \mu^{j} \prod_{k=0}^{m-2}{(2\mu^{j+1,k})}$ we obtain 
	from (\ref{ineq:ejbound1}) and (\ref{eq:mlrdefs})
	\begin{align}
	e^{j+1} \leq \eta^{j+1} e^{j},
	\label{ineq:ejbound2}
	\end{align}
	with
	\begin{align}
	\eta^{j+1} =
	\prod_{k=0}^{m-2}{\bigg(\frac{1}{\cos\frac{\theta^{j+1,k}}{4}}\bigg)}
	\frac{1}{2\cos\frac{\theta^{j}}{4}}
	\label{eq:etaj}
	\end{align}
	By the subdivision of the normals, we have
	\begin{align}
	\theta^{j+1,0} \leq \frac{1}{2}\theta^{j}, \ \
	\theta^{j+1,k} \leq\theta^{j+1,k-1}.
	\label{ineq:decrthetaj}
	\end{align}
	Thus
	\begin{align}
	\theta^{j+1} = \theta^{j+1,m-1} \leq \theta^{j+1,0}\leq\frac{1}{2}\theta^{j}
	\label{ineq:decrthetaj2}.
	\end{align}
	In view of (\ref{ineq:decrthetaj}) and (\ref{ineq:decrthetaj2})
	$\theta^{j,k}\leq\theta^{j,0}\leq\theta^{j-1}, k =0,...,m-1$, 
	and we get from
	(\ref{eq:etaj})
	\begin{align}
	\eta^{j+1}\leq \frac{1}{2}\bigg( \frac{1}{\cos\frac{\theta^{j}}{4}}\bigg)^m.
	\end{align}
	We also conclude from (\ref{ineq:decrthetaj2}) that
	$\frac{1}{\cos\frac{\theta^{j}}{4}}$ is monotone
	decreasing with $j$.
	\\
	Let $j^*$ be the minimal $j$ for which
	\begin{align}
	\bigg( \frac{1}{\cos\frac{\theta^j}{4}}\bigg)^m <2.
	\label{ineq:osthetajlessthan2}
	\end{align}
	Then for $j\geq j^*,\ \eta^j \leq \eta^{j^*}<1$ and by (\ref{ineq:ejbound2}) the MLR is
	contractive.
\end{proof}
Defining $\theta_m = {\theta^j}^*$ we obtain from (\ref{ineq:osthetajlessthan2})
\[
\theta_m = 4 \arccos \frac{1}{\sqrt[m]{2}}
\]
For $m=1$, $\theta_1 = 4\arccos\frac{1}{2} = 4\frac{\pi}{3} > \pi$, and since
the angle between any two normal vectors is at most $\pi$, we conclude 
that the MLR algorithm is contractive for any initial data from the first
level. Similarly for $m=2$, since 
\[\theta_2 = 4 \arccos \frac{1}{\sqrt{2}} = 4\frac{\pi}{4} = \pi.\]
For $m=3$, $\theta_3 = 4\arccos\frac{1}{\sqrt[3]{2}} > \frac{7}{9}\pi$
and the MLR algorithm with $m=3$ is contractive from level $j^*=1$.
We give in Table~\ref{tbl:mlrjstar} lower bounds of $\theta_m$ 
for several small values of $m$.
\begin{center}
	\begin{tabular}{ |c|c|c|c|c|c|c| } 
		\hline
		& & & & & & \\ 
		m & 1 &2&3&4&5&6\\ 
		\hline
		& & & & & & \\ 
		$\theta_m$ & $>\pi$ & $\ \pi$\ & $>\frac{7}{9}\pi$ &
		$>\frac{13}{18}\pi$ & $>\frac{11}{18}\pi$ & $>\frac{10}{18}\pi$ \\
		& & & & & & \\ 
		\hline
	\end{tabular}
	\captionof{table} {$\theta_m$ as a function of $m$}
	\label{tbl:mlrjstar}
\end{center}
As can be concluded from Table~\ref{tbl:mlrjstar}, $j^*=1$ for $3\leq m\leq6$.

To show the convergence of the MLR scheme by Result A, it remains to prove that
the scheme is displacement safe.
\begin{lemma}
	The MLR scheme is displacement safe.
	\label{lemma:mlrdispsafe}	
\end{lemma}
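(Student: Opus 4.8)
The plan is to verify the safe-displacement condition of Result A by tracking how far the point $p^j_i$ travels during the $m-1$ smoothing steps of level $j+1$. Since the elementary refinement leaves the even points fixed, $p^{j+1,0}_{2i}=p^j_i$, so that $p^{j+1}_{2i}-p^j_i=p^{j+1,m-1}_{2i}-p^{j+1,0}_{2i}$, and by the triangle inequality the total displacement is at most the sum of the per-step displacements, $|p^{j+1}_{2i}-p^j_i|\le\sum_{k=0}^{m-2}|p^{j+1,k+1}_{2i}-p^{j+1,k}_{2i}|$. The whole proof reduces to bounding each summand and then controlling the resulting sum.

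First I would bound a single smoothing step. In the step $P^{j+1,k+1}_{2i}\gets P^{j+1,k}_{2i}\circledcirc_{\frac12}P^{j+1,k}_{2i+1}$ the new point $p^{j+1,k+1}_{2i}$ is the arc-midpoint of $\bigfrown{P^{j+1,k}_{2i}\circledcirc P^{j+1,k}_{2i+1}}$, so the chord $[p^{j+1,k}_{2i},p^{j+1,k+1}_{2i}]$ subtends exactly half the central angle of the full chord $[p^{j+1,k}_{2i},p^{j+1,k}_{2i+1}]$. The same elementary-trigonometry identity used for (\ref{ineq:insertbound}) then gives $|p^{j+1,k+1}_{2i}-p^{j+1,k}_{2i}|=\frac{|p^{j+1,k}_{2i}\,p^{j+1,k}_{2i+1}|}{2\cos\left(\theta(n^{j+1,k}_{2i},n^{j+1,k}_{2i+1})/4\right)}$; bounding the per-edge angle by $\theta^{j+1,k}$ and the edge by $e^{j+1,k}$ yields $|p^{j+1,k+1}_{2i}-p^{j+1,k}_{2i}|\le\mu^{j+1,k}e^{j+1,k}$.

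Next I would control the edge lengths $e^{j+1,k}$ in terms of $e^j$ by reusing the contraction estimates already proved in Lemma~\ref{lemma:mlrcontractive}: inequality (\ref{ineq:insertbound2}) gives $e^{j+1,0}\le\mu^j e^j$, and repeated application of (\ref{ineq:nextsmoothiter}) gives $e^{j+1,k}\le\mu^j\prod_{l=0}^{k-1}(2\mu^{j+1,l})\,e^j$. Since the angle between any two unit normals is at most $\pi$, every factor obeys $\mu^{j,k}=\frac{1}{2\cos(\theta^{j,k}/4)}\le\frac{1}{2\cos(\pi/4)}=\frac{1}{\sqrt2}<1$, so each $2\mu^{j+1,l}\le\sqrt2$. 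Substituting these bounds into the displacement sum gives $|p^{j+1}_{2i}-p^j_i|\le\sum_{k=0}^{m-2}\mu^{j+1,k}e^{j+1,k}\le c\,e^j$, where $c$ is obtained by summing a finite geometric series in $\sqrt2$ and depends only on $m$, not on $j$ or $i$ — which is precisely the safe-displacement requirement.

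I expect the only delicate point to be bookkeeping rather than geometry: the single-step displacement is the very circular-arc identity already exploited in Lemma~\ref{lemma:mlrcontractive}, so the real content is observing that the per-step displacements are dominated by the \emph{contracting} edge lengths and that the uniform bound $\mu^{j,k}\le1/\sqrt2$ keeps the geometric sum finite and level-independent. The one thing worth double-checking is that all these estimates hold for \emph{every} $j$ (not merely $j\ge j^*$): indeed the per-edge inequalities (\ref{ineq:insertbound2}) and (\ref{ineq:nextsmoothiter}) are purely geometric and the bound $\theta^{j,k}\le\pi$ is always in force, so the constant $c$ is genuinely uniform.
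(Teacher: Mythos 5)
Your proof is correct and follows essentially the same route as the paper's: the same triangle-inequality decomposition of the displacement into per-smoothing-step displacements, the same arc-chord identity giving the factor $\frac{1}{2\cos(\theta/4)}$, the same reuse of the edge-length estimates (\ref{ineq:insertbound2}) and (\ref{ineq:nextsmoothiter}) from Lemma~\ref{lemma:mlrcontractive}, and the same observation that $\theta^{j,k}\le\pi$ makes all constants uniform in $j$. The only (cosmetic) difference is that you bound the cosines below by $\cos\frac{\pi}{4}=\frac{1}{\sqrt{2}}$ where the paper uses $\cos\frac{\pi}{3}=\frac{1}{2}$, so your constant grows like $(\sqrt{2})^{m}$ rather than the paper's $2^{m-1}$.
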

\begin{proof}
	The proof uses the notation of Lemma~\ref{lemma:mlrcontractive} and its
	proof. By the triangle inequality and since $p^{j+1,0}_{2i} = p^j_i$,\
	$p^{j+1}_{2i} = p^{j+1, m-1}_{2i}$ we get
	\begin{align}
		|p^{j+1}_{2i}\ p^j_i| \leq \sum\limits_{k=0}^{m-2}
		|p_{2i}^{j+1,k}p_{2i}^{j+1, k+1}|.
	\label{mlrdispsafe_ineq1}
	\end{align} 
	In view of Algorithm 1 and the geometry of the circle average (see 
	Figure~\ref{fig:mlrconv}) we have
	\[
	e^{j+1,k+1}\leq2\max\limits_{i}{|p^{j+1,k}_i\ p^{j+1,k+1}_i|},\ k=0,...,m-2,
	\] 
	\[
	|p^{j+1,k}_i\ p_i^{j+1,k+1}|\leq\frac{e^{j+1,k}}{2\cos\frac{\theta^{j+1,k}}{4}},
	\ k = 0,...,m-2,
	\]
	\[
	e^{j+1,0}\leq \frac{e^j}{2\cos\frac{\theta^j}{4}}.
	\]
	Thus for $k=0,...,m-2$,
	\begin{align}
	\begin{split}
	\max_{i}{|p_i^{j+1,k}\ p_i^{j+1,k+1}|} & \leq
	\frac{\max\limits_{i}{|p_i^{j+1,k-1}\ p_i^{j+1,k}|}}
	{\cos\frac{\theta^{j+1,k}}{4}}
	\leq \dots \leq \frac{\max\limits_{i}{|p_i^{j+1,0}\ p_i^{j+1,1}|}} {\prod\limits_{h=1}^{k}{\cos\frac{\theta^{j+1,h}}{4}}}
	\\
	&\leq \frac{e^{j+1,0}}{2\cos\frac{\theta^{j+1,0}}{4}\prod\limits_{h=1}^{k}
		{\cos\frac{\theta^{j+1,h}}{4}}}
	\leq \frac{e^j}{4\cos\frac{\theta^j}{4}\prod\limits_{h=0}^{k}
		{\cos\frac{\theta^{j+1,h}}{4}}}
	\end{split}
	\label{mlrdispsafe_ineq2}
	\end{align}
	By (\ref{ineq:decrthetaj}),(\ref{ineq:decrthetaj2}) and since
	$\theta^j\leq \theta^0 \leq \pi$ we have
	$\frac{\theta^{j+1,k}}{4} \leq \frac{\theta^j}{4} < \frac{\pi}{3},
	  \ k=0,1,\dots,m-1,$ 
	and (\ref{mlrdispsafe_ineq2}) can be replaced by
	\[
	\max_{i}{|p^{j+1,k}_{2i}\ p^{j+1,k+1}_{2i}|} \leq 
	\frac{e^j}{4\Big(\cos\frac{\pi}{3}\Big)^{k+2}}\leq 2^ke^j, 
	\ k = 0,1,\dots,m-2.
	\]
	Insertion of this bound in (\ref{mlrdispsafe_ineq1}) leads to
	\[
	\max_{i}{|p_{2i}^{j+1}p_i^j|}\leq e^j \sum_{k=0}^{m-2}{2^k} \leq 2^{m-1}e^j.
	\]
	This proves that the MLR scheme is displacement safe, with a constant
	which grows exponentially with $m$.
\end{proof}
We conclude from Lemmas~\ref{lemma:mlrcontractive}, \ref{lemma:mlrdispsafe} and 
Result A the convergence of the points. It remains to prove the convergence of 
the normals. Recalling that the operation  between the normals in the circle
average is a geodesic average independent of the points, the convergence of
the normals is a direct consequence of the following result, which is a special
case of Corollary 3.3 in ~\cite{ds:16}.
\paragraph{Result\ B}
(\cite{ds:16}, Corollary 3.3) The LR algorithm with the Euclidean average
replaced by a geodesic average is convergent.
\begin{cor}
	The MLR scheme for $m \ge 1$ is convergent.
\end{cor}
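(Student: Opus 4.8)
The plan is to assemble the convergence of the point--normal sequence from the two independent components already established, following the two-part strategy announced just before Result~A: first the points, then the normals, and finally observe that these two together are exactly what convergence of the MLR scheme means.

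For the points I would invoke Result~A, whose two hypotheses have essentially been verified for the MLR. Lemma~\ref{lemma:mlrdispsafe} supplies the safe-displacement bound $|p^{j+1}_{2i}-p^j_i|\le c\,e^j$ with $c=2^{m-1}$, and its proof uses only $\theta^j\le\theta^0\le\pi$, so it is valid at every refinement level. Lemma~\ref{lemma:mlrcontractive} supplies contractivity $e^{j+1}\le\eta\,e^j$ with $\eta\in(0,1)$, but only for $j\ge j^*$. The one point needing care is precisely this level threshold: Result~A is phrased as a condition on a sequence of control polygons, and the finitely many coarse levels $j<j^*$ cannot affect the limit. Hence I would apply Result~A to the tail sequence $\{\mathcal{P}^{j}\}_{j\ge j^*}$, regarding $\mathcal{P}^{j^*}$ as fresh initial data; since both conditions hold from $j^*$ on, the points $p^{j}_i$ converge to a continuous limit curve.

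For the normals, the key observation, already noted in the text, is that step 5 of the circle-average construction sets $n_\omega=GA(n_0,n_1;\omega)$ via the geodesic average (\ref{eq:geoavg}), which depends only on the input normals and not at all on the points. Consequently the normal components evolve under the MLR exactly as the data of the linear Lane--Riesenfeld algorithm would evolve with the Euclidean average replaced by the geodesic average on the unit circle. Result~B therefore applies verbatim and yields convergence of the normals $n^{j}_i$ to a continuous limit normal field. Combining the two components gives convergence of the full point--normal sequence, which is the assertion of the corollary.

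I do not expect a genuine obstacle here, since the corollary is a synthesis of results proved earlier; the only real subtlety is the bookkeeping around the contractivity threshold $j^*$, resolved by passing to the tail of the refinement sequence, together with checking that displacement safety indeed holds at those tail levels (which it does, as its bound is level-uniform). The hard part, such as it is, was already absorbed into Lemma~\ref{lemma:mlrcontractive} and Lemma~\ref{lemma:mlrdispsafe}. One clarifying remark worth including is that, as the paper emphasizes, the limit normals need not coincide with the geometric normals of the limit curve, so ``convergence'' here means convergence of the two components separately rather than that the limiting data form a compatible point-normal curve.
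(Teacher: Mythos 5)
Your proposal is correct and follows essentially the same route as the paper: convergence of the points via Lemmas~\ref{lemma:mlrcontractive} and \ref{lemma:mlrdispsafe} together with Result~A, and convergence of the normals via Result~B, using the fact that the geodesic averaging of normals is decoupled from the points. Your explicit handling of the contractivity threshold $j^*$ by applying Result~A to the tail sequence is a detail the paper leaves implicit, but it is the same argument.
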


%------------------------------------------------------------------------------
\subsubsection{Interactive demo}
\label{subsubsec:video}
We developed an interactive software with drawing capabilities, whose input
consists of point-normal pairs, and its output is the corresponding limit
of the MLR scheme with $m=1$, displayed on the screen. In this software
points can be dragged, normals can be rotated, and control
polygons can be extended and reflected. Also several control polygons can
be maintained simultaneously.  

As an example, the head of Mickey Mouse is drawn, starting from a simple
control polygon. A video of the drawing process, from an empty screen
to the final sketch of Mickey Mouse can be found at
\url{https://youtu.be/CGTiDztzVaM}.
This example demonstrates the drawing capabilities  of the MLR
scheme, with $m=1$, and the quality of naive choice of initial normals,
as explained in subsection~\ref{subsec:naive}.

%-------------------------------------------------------------------------------
\subsection{The modified 4-point scheme (M4Pt)}
In this section we modify the interpolatory linear 4-point subdivision 
scheme (L4Pt) \cite{dgl:87},\cite{dubucDeslauriers2},
\begin{align}
p^{j+1}_{2i} = p^{j}_{i}, \ 
p^{j+1}_{2i+1} = -\frac{1}{16}\big(p^{j}_{i-1} + p^{j}_{i+2}\big) + 
\frac{9}{16}\big(p^{j}_{i} + p^{j}_{i+1}\big)
\label{eq:l4pt}
\end{align}
We use the form suggested in \cite{kd:13} for the refinement rule in (\ref{eq:l4pt})
written in terms of repeated binary averages as
\begin{align}
p^{j+1}_{2i+1} = 
\frac{1}{2}\big(\frac{9}{8}p^{j}_{i} -\frac{1}{8} p^{j}_{i-1}\big) + 
\frac{1}{2}\big(\frac{9}{8}p^{j}_{i+1} - \frac{1}{8}p^{j}_{i+2}\big).
\label{eq:l4pt_repeated}
\end{align}
The modified 4-point scheme (M4Pt) with the circle average replacing 
the arithmetic average is presented in Algorithm 2. 

\begin{algorithm}
\caption{M4Pt}\label{algo:m4pt}
\textbf{Input:} $P_i = (p_i,n_i),\ i\in \mathbb{Z}.$
\begin{algorithmic}
	\For{$i \in \mathbb{Z}$}
		\State$P^0_i \gets P_i$
	\EndFor
	\For{j=1,2,\dots}
		\For{$i \in \mathbb{Z}$}
			\State $P^{j}_{2i}   \gets P^{j-1}_{i}$
			\State $S_L \gets 
			P^{j-1}_{i} \circledcirc_{-\frac{1}{8}} P^{j-1}_{i-1}$
			\State $S_R \gets 
			P^{j-1}_{i+1} \circledcirc_{-\frac{1}{8}} P^{j-1}_{i+2}$
			\State $P^{j}_{2i+1} \gets 
			  S_L \circledcirc_{\frac{1}{2}}S_R$			
		\EndFor
	\EndFor
\end{algorithmic}
\end{algorithm}
\pagebreak
Figure~\ref{fig:m4pt_changenorm} demonstrates the editing capabilities of the
M4Pt scheme by a change of one initial normal. Note that the control polygon 
and the normals in this example are the same as those in
Figure~\ref{fig:mlr_changenorm}.

\begin{figure} [!h]
	%\vspace{-50px}
	\begin{subfigure}[b]{0.4\textwidth}
		\centering
		\includegraphics[scale=0.25]{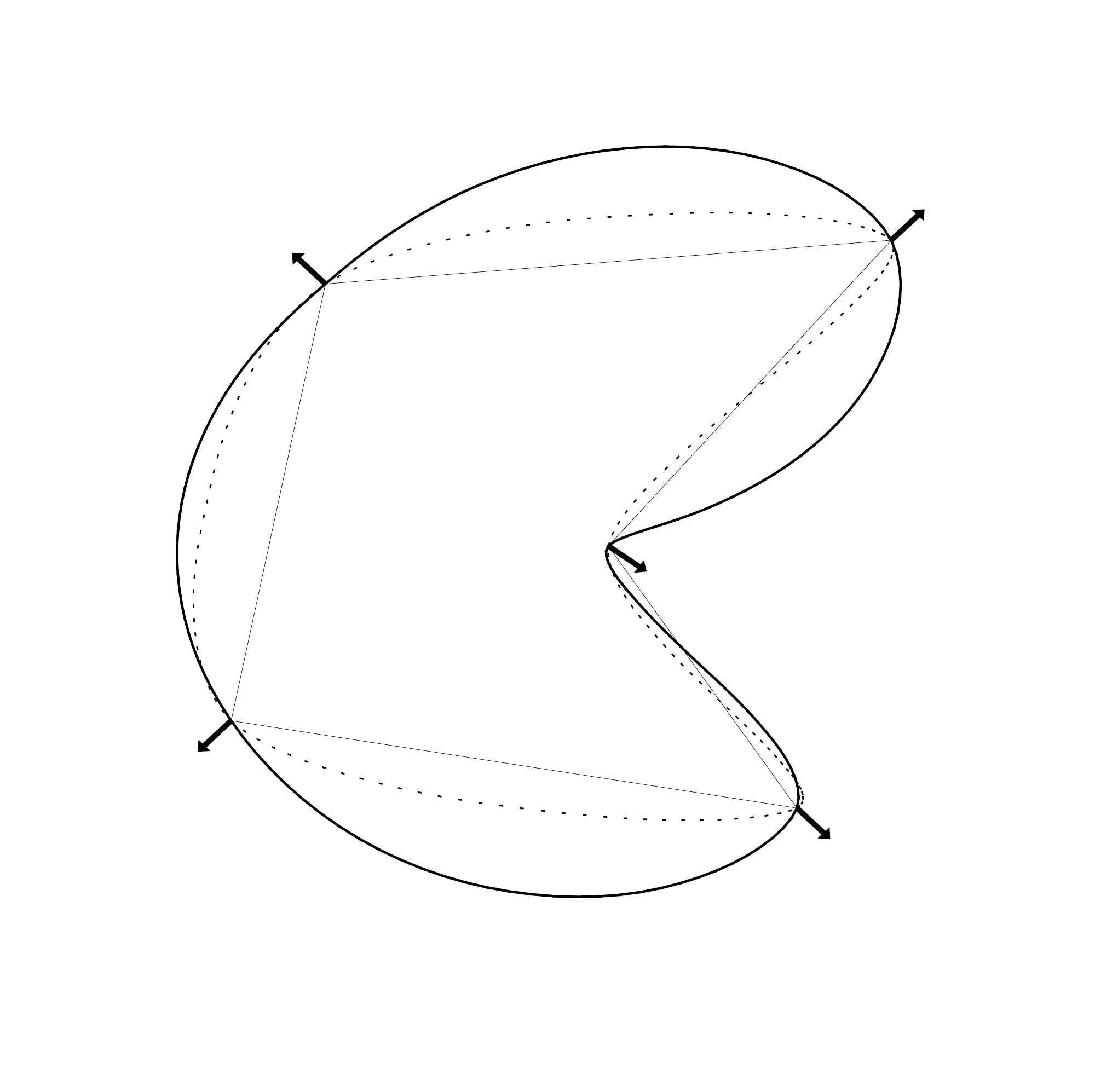} 
	\end{subfigure}
	\qquad\qquad\qquad\qquad
	\begin{subfigure}[b]{0.4\textwidth}
		\centering
		\includegraphics[scale=0.25]{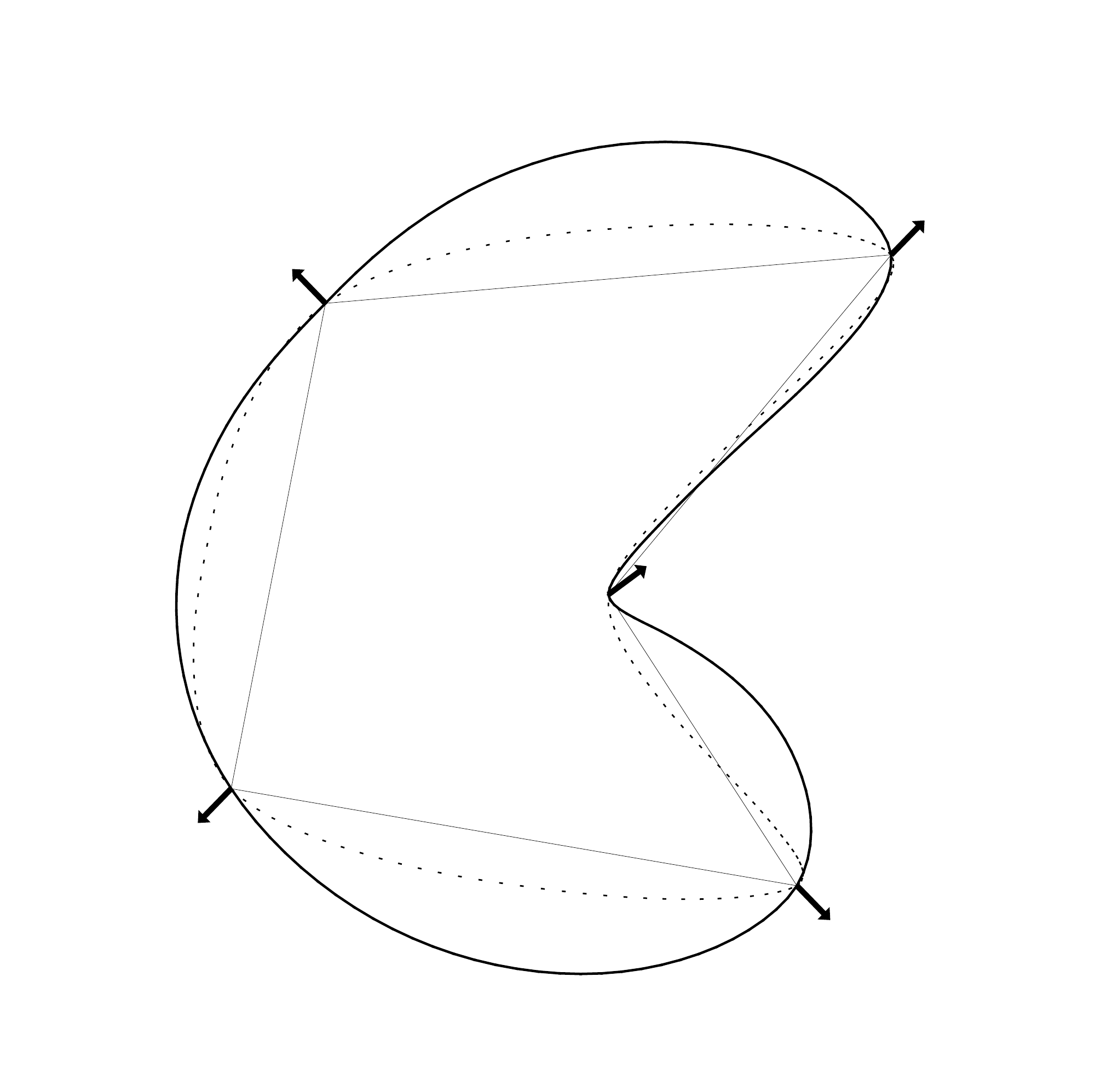} 
	\end{subfigure}
	%\\\qquad\\\qquad\\
	\begin{subfigure}[b]{0.4\textwidth}
		\centering
		\includegraphics[scale=0.25]{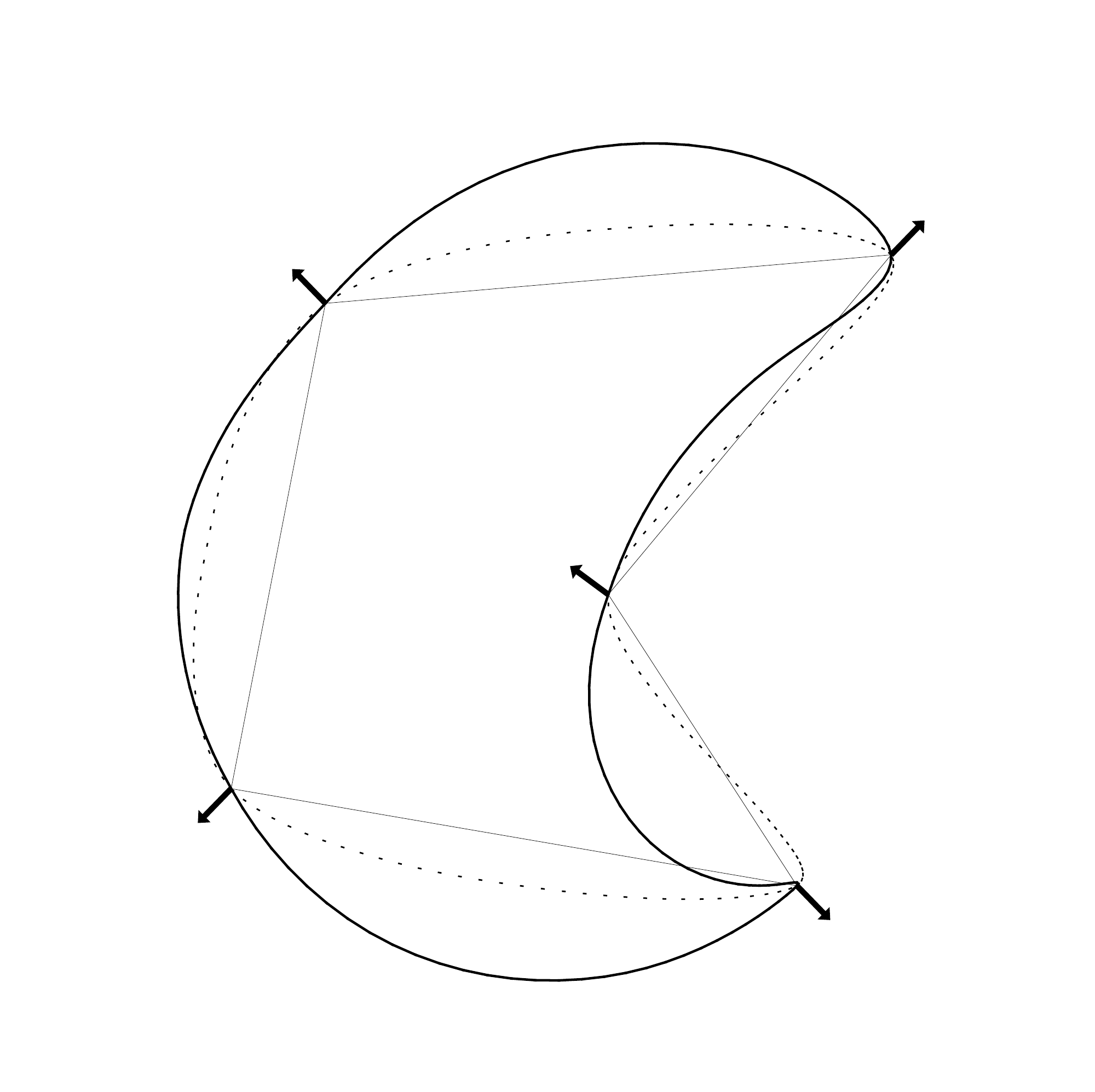} 
	\end{subfigure}
	\qquad\qquad\qquad\qquad
	\begin{subfigure}[b]{0.4\textwidth}
		\centering
		\includegraphics[scale=0.25]{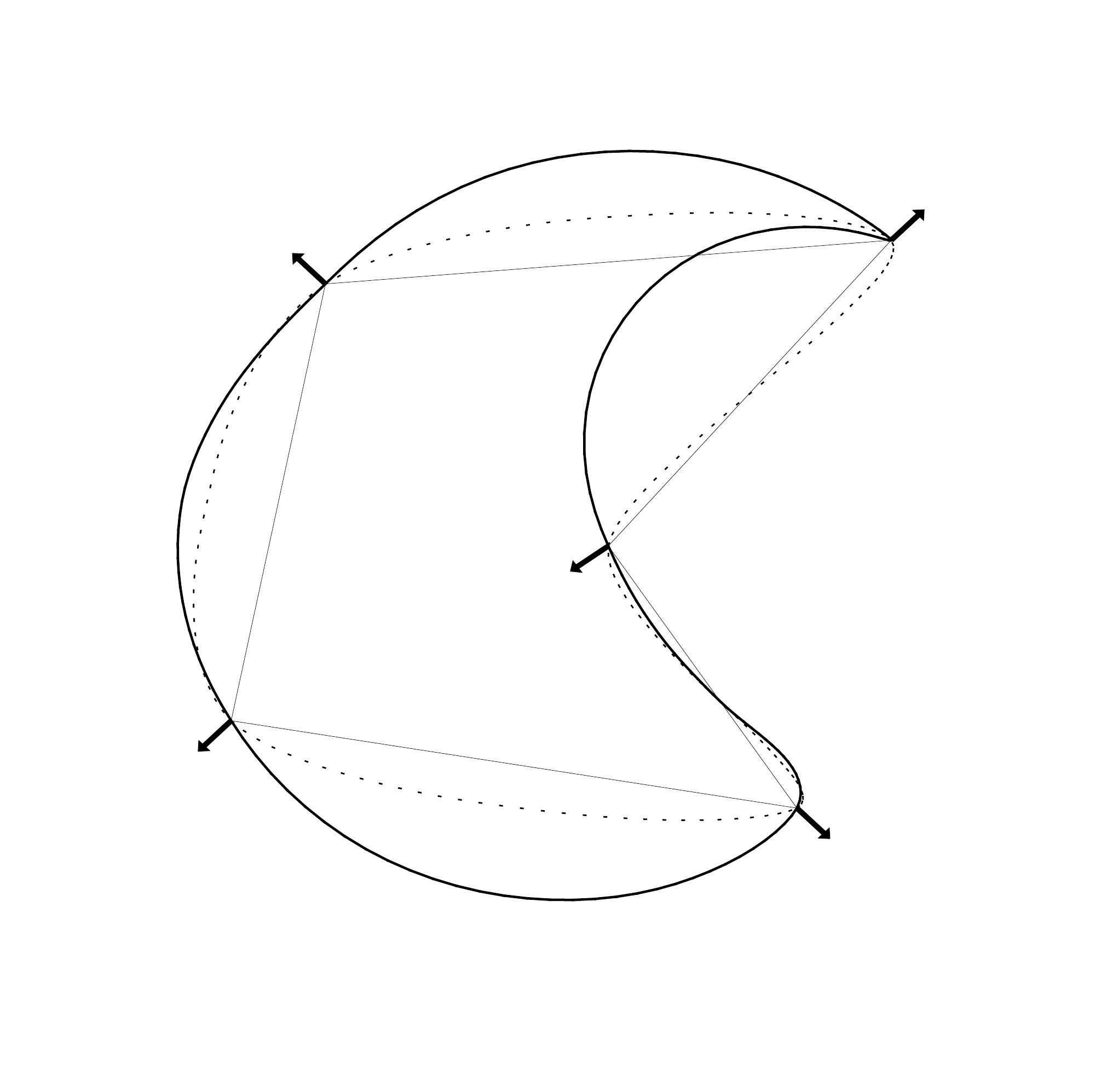} 
	\end{subfigure}
	\captionsetup{justification=centering}
	\caption{Editing capabilities of the M4Pt by a change of 
		one initial normal.\\bold: M4Pt curve, dots: L4Pt curve.}
	\label{fig:m4pt_changenorm}
\end{figure}

\subsubsection{Convergence analysis}
We begin the analysis by proving the convergence of the normals.
As we mentioned, the operation  between the normals in the circle
average is a geodesic average independent of the points. The convergence of
the normals is a direct consequence of the following result.
\paragraph{Result\ C}
(\cite{ds:14}, Example 5.1) The 4-point scheme adapted to manifold
valued data by replacing  in (\ref{eq:l4pt_repeated}) the average 
by geodesic average is convergent.
\\
\\
By definition, any interpolatory subdivision is displacement safe. Thus it remains to prove
the contractivity of the M4Pt, in order to show its convergence by Result A. 
\begin{lemma}{(Contractivity)}
	The M4Pt scheme is contractive for $j$ large enough.
\label{lemma:m4ptcontractivity}
\end{lemma}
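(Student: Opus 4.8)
The plan is to exploit that the normal part of the M4Pt scheme is \emph{exactly} the linear 4-point scheme applied to the normal angles. Since the geodesic average (\ref{eq:geoavg}) is linear in the angle, the angles $\alpha^j_i$ of the normals $n^j_i$ evolve under the linear rule (\ref{eq:l4pt_repeated}); hence by Result C the normals converge, and therefore the quantity $\theta^j:=\max_i\theta(n^j_i,n^j_{i+1})$ tends to $0$ as $j\to\infty$. I will use this to prove a bound of the form $e^{j+1}\le\frac34\bigl(1+C(\theta^j)^2\bigr)e^j$, and then choose $j^*$ so large that the factor stays strictly below $1$ for all $j\ge j^*$.

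First I would bound the two new edges. Since the scheme is interpolatory, $p^{j+1}_{2i}=p^j_i$, so the new edges are $[p^j_i,p^{j+1}_{2i+1}]$ and $[p^{j+1}_{2i+1},p^j_{i+1}]$; by the left--right symmetry of Algorithm~2 it suffices to bound the first. Writing $s_L,s_R$ for the points of $S_L=P^j_i\circledcirc_{-1/8}P^j_{i-1}$ and $S_R=P^j_{i+1}\circledcirc_{-1/8}P^j_{i+2}$, and $p^{j+1}_{2i+1}$ for the point of $S_L\circledcirc_{1/2}S_R$, the triangle inequality gives
\[
|p^j_i\,p^{j+1}_{2i+1}|\le|p^j_i\,s_L|+|s_L\,p^{j+1}_{2i+1}|.
\]
For the first term, $S_L$ is a circle average on the candidate circle through $p^j_{i-1},p^j_i$ of radius $\frac{|p^j_{i-1}p^j_i|}{2\sin(\phi/2)}$, with $\phi:=\theta(n^j_{i-1},n^j_i)$, and $s_L$ lies at central angle $\tfrac18\phi$ beyond $p^j_i$; the chord formula then yields
\[
|p^j_i\,s_L|=|p^j_{i-1}p^j_i|\,\frac{\sin(\phi/16)}{\sin(\phi/2)}=\tfrac18|p^j_{i-1}p^j_i|\bigl(1+O(\phi^2)\bigr)\le\tfrac18 e^j\bigl(1+O((\theta^j)^2)\bigr).
\]
For the second term, $p^{j+1}_{2i+1}$ is the midpoint circle average of $S_L,S_R$, so by the half-chord identity used in (\ref{ineq:insertbound}), $|s_L\,p^{j+1}_{2i+1}|=\frac{|s_L\,s_R|}{2\cos(\psi/4)}$ with $\psi:=\theta(n_{s_L},n_{s_R})$. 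The extrapolated angles are $\tfrac98\alpha_i-\tfrac18\alpha_{i-1}$ and $\tfrac98\alpha_{i+1}-\tfrac18\alpha_{i+2}$, whence $\psi\le\tfrac98\theta^j+\tfrac18\cdot 3\theta^j=\tfrac32\theta^j$, so $\frac{1}{2\cos(\psi/4)}=\tfrac12(1+O((\theta^j)^2))$; combined with $|s_L\,s_R|\le|s_L\,p^j_i|+|p^j_i\,p^j_{i+1}|+|p^j_{i+1}\,s_R|\le\tfrac54 e^j(1+O((\theta^j)^2))$ this gives $|s_L\,p^{j+1}_{2i+1}|\le\tfrac58 e^j(1+O((\theta^j)^2))$.

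Adding the two bounds yields $|p^j_i\,p^{j+1}_{2i+1}|\le\tfrac34 e^j\bigl(1+C'(\theta^j)^2\bigr)$, and the symmetric argument gives the same bound for $[p^{j+1}_{2i+1},p^j_{i+1}]$; taking the maximum over $i$ gives $e^{j+1}\le\tfrac34\bigl(1+C'(\theta^j)^2\bigr)e^j$. Since $\theta^j\to0$ I can fix $\delta>0$ with $\tfrac34(1+C'\delta^2)<1$ and then pick $j^*$ with $\theta^j\le\delta$ for all $j\ge j^*$; setting $\eta=\tfrac34(1+C'\delta^2)\in(0,1)$ gives $e^{j+1}\le\eta e^j$ for every $j\ge j^*$, the asserted contractivity. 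I expect the main obstacle to be controlling the intermediate data $S_L,S_R$ precisely: namely bounding the angle $\theta(n_{s_L},n_{s_R})$ between the extrapolated normals and keeping every $\tfrac{1}{\cos}$ correction factor uniformly close to $1$, so that the leading constant $\tfrac34$ provably remains below $1$ once the angles are small.
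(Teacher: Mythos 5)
Your proof is correct and takes essentially the same route as the paper's own: the identical decomposition $|p^j_i\,p^{j+1}_{2i+1}|\le|p^j_i\,s_L|+|s_L\,p^{j+1}_{2i+1}|$, the same chord and half-chord identities for the circle average, the bound $|s_Ls_R|\le e^j\bigl(\tfrac54+o(1)\bigr)$, and Result C to force $\theta^j\to0$, all leading to the same limiting contraction factor $\tfrac34$. The only deviations are cosmetic: you carry the trigonometric corrections as $1+O((\theta^j)^2)$ where the paper keeps the explicit $\sin$ and $1/\cos$ factors, and your bound $\theta(n_L,n_R)\le\tfrac32\theta^j$ is slightly cruder than the paper's $\tfrac54\theta^j$, which is immaterial since both only serve to make the cosine factor tend to $1$.
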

\begin{figure} [!htb]
	\begin{subfigure}{1.0\textwidth}
		\centering
		\includegraphics[scale=0.65]{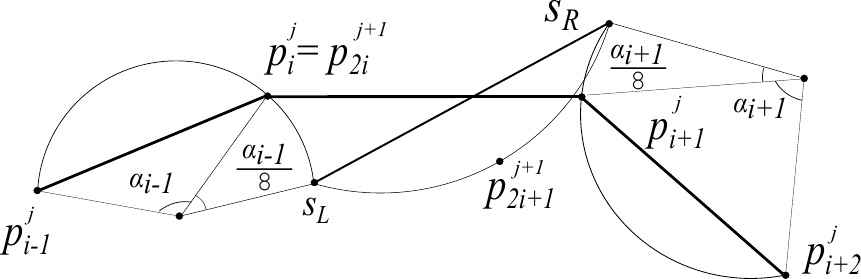} 
	\end{subfigure}
	\caption{The setup of Lemma~\ref{lemma:m4ptcontractivity}.}
	\label{fig:m4ptcontractivity}
\end{figure}
\begin{proof}
	Let $S_L = (s_L,n_L),\ S_R=(s_R, n_R)$ be the intermediate pairs 
	obtained by the M4Pt scheme (see Algorithm 2). For the proof we
	introduce the notation $\alpha_i=\theta(n^j_i,n^j_{i+1})$.
	By the triangle inequality and the geometry of the circle average
	(see Figure~\ref{fig:m4ptcontractivity}),
	\begin{align}
	|p_{2i}^{j+1} p_{2i+1}^{j+1}| \leq 
	|p_{2i+1}^{j+1} s_L|+|s_L p_{2i}^{j+1}| \leq
	\frac{|s_Rs_L|}{2\cos(\frac{1}{4}\theta(n_L,n_R))} + 
	\frac{|p^j_{i-1}p^j_i|sin\frac{\alpha_{i-1}}{16}}
	{sin\frac{\alpha_{i-1}}{2}}.
	\label{ineq:m4pt_c1}
	\end{align}	
	Next we show that
	\begin{align}
	\theta(n_L,n_R) \leq \frac{5}{4}\theta^j.
	\label{ineq:m4pt_c2}
	\end{align}
	Indeed, $\theta(n_L,n_R) \leq \theta(n_L,n^j_i) + \theta(n^j_i,n^j_{i+1})+
	\theta(n^j_{i+1}, n_R)$, where $\theta(n_L,n^j_i) = \frac{1}{8}
	\theta(n^j_{i-1},n^j_i)$ and similarly
	$\theta(n_R,n^j_{i+1}) = \frac{1}{8}\theta(n^j_{i+1},n^j_{i+2})$.
	Since $\theta^j = \max\limits_{i}{\theta(n^j_i,n^j_{i+1})}$, (\ref{ineq:m4pt_c2})
	follows.

	To bound $|s_L s_R|$ we use again the triangle inequality
	\[
	|s_L s_R| \leq |s_Lp^j_{i}| + |p^j_{i}p^j_{i+1}| + |p^j_{i+1}s_R|,
	\]
	and since $S_L = P^{j}_{i} \circledcirc_{-\frac{1}{8}} P^{j}_{i-1}, \ 
	S_R =
	P^{j}_{i+1} \circledcirc_{-\frac{1}{8}} P^{j}_{i+2}$,
	\begin{align}
	|s_Lp^j_i| \leq \frac{|p^j_{i-1}p^j_i| sin\frac{\alpha_{i-1}}{16}}
	{sin\frac{\alpha_{i-1}}{2}}, \ 
	|s_Rp^j_{i+1}| \leq \frac{|p^j_{i+2}p^j_{i+1}| sin\frac{\alpha_{i+1}}{16}}
	{sin\frac{\alpha_{i+1}}{2}}
	\end{align}
	Thus
	\begin{align}
	|s_Ls_R| \leq e^j\Big(1 + \frac{sin\frac{\alpha_{i-1}}{16}}
	                      {sin\frac{\alpha_{i-1}}{2}} + 
	                      \frac{sin\frac{\alpha_{i+1}}{16}}
	                      {sin\frac{\alpha_{i+1}}{2}}\Big),
	\label{ineq:m4pt_slsr} 
	\end{align}
	and we get from (\ref{ineq:m4pt_c1}), (\ref{ineq:m4pt_c2}) and 
	(\ref{ineq:m4pt_slsr})
	\[
	|p^{j+1}_{2i} p^{j+1}_{2i+1}| \leq e^j\Big(1 + \frac{sin\frac{\alpha_{i-1}}{16}}
	{sin\frac{\alpha_{i-1}}{2}} + 
	\frac{sin\frac{\alpha_{i+1}}{16}}
	{sin\frac{\alpha_{i+1}}{2}}\Big)\frac{1}{2\cos\frac{5}{16}\theta^j} +
		e^j\Big(\frac{sin\frac{\alpha_{i-1}}{16}}
		{sin\frac{\alpha_{i-1}}{2}}\Big).
	\]
	Similarly
	\[
	|p^{j+1}_{2i+1} p^{j+1}_{2i+2}| \leq e^j\Big(1 + \frac{sin\frac{\alpha_{i-1}}{16}}
	{sin\frac{\alpha_{i-1}}{2}} + 
	\frac{sin\frac{\alpha_{i+1}}{16}}
	{sin\frac{\alpha_{i+1}}{2}}\Big)\frac{1}{2\cos\frac{5}{16}\theta^j} +
	e^j\Big(\frac{sin\frac{\alpha_{i+1}}{16}}
	{sin\frac{\alpha_{i+1}}{2}}\Big).
	\]
	Therefore
	\[
	e^{j+1} \leq e^j\Big(1 + \frac{sin\frac{\alpha_{i-1}}{16}}
	{sin\frac{\alpha_{i-1}}{2}} + 
	\frac{sin\frac{\alpha_{i+1}}{16}}
	{sin\frac{\alpha_{i+1}}{2}}\Big)\frac{1}{2\cos\frac{5}{16}\theta^j} +
	Ae^j,
	\]
	with 
	\[A=\max_{i}{\frac{sin\frac{\alpha_{i}}{16}}
		{sin\frac{\alpha_{i}}{2}}}.
	\]
	Thus, $e^{j+1} \leq \eta^je^{j}$ with
	\begin{align}
	\eta^j = \frac{1}{2}\Big(1 + \frac{sin\frac{\alpha_{i-1}}{16}}
	{sin\frac{\alpha_{i-1}}{2}} + 
	\frac{sin\frac{\alpha_{i+1}}{16}}
	{sin\frac{\alpha_{i+1}}{2}}\Big)\frac{1}{\cos\frac{5}{16}\theta^j} +
	A.
	\label{ineq:m4pt_etaj}
	\end{align}
	Since $\alpha_i \leq \theta^j$ and the normals converge, 
	$ \lim_{j\rightarrow\infty}{\theta^j} = 0 $. Thus we get from 
	(\ref{ineq:m4pt_etaj})
	\begin{align}
	\eta^* = \lim_{j\rightarrow\infty}{\eta^j} 
	= \frac{1}{2}(1+\frac{1}{8}+\frac{1}{8}) + \frac{1}{8} = \frac{3}{4}
	\label{ineq:m4pt_etastar}
	\end{align}
	We conclude from (\ref{ineq:m4pt_etastar}) that for $j$ large enough
	$\eta^j<1$. Defining $J^*$ such that $\eta^j<\frac{7}{8}$ for $j \geq J^*$,
	we get that the M4Pt scheme is contractive for $j \geq J^*$, with
	$\eta^j = \frac{7}{8}$.
\end{proof}
We conclude from Lemma~\ref{lemma:m4ptcontractivity} and Result A 
the convergence of the points.
\begin{cor}
	The M4Pt scheme is convergent.
\end{cor}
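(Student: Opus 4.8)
The plan is to obtain the corollary by assembling the two convergence ingredients already prepared above, handling the points and the normals separately, exactly as was done for the MLR scheme. Convergence of the scheme means that both the sequence of points and the sequence of normals converge, so it suffices to treat each in turn.

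For the points I would invoke Result~A, whose two hypotheses I check as follows. Displacement safety is immediate: the M4Pt scheme is interpolatory, so $p^{j+1}_{2i} = p^j_i$ and hence $|p^{j+1}_{2i} - p^j_i| = 0 \le c\,e^j$ for any $c>0$, as already noted before Lemma~\ref{lemma:m4ptcontractivity}. Contractivity is furnished by Lemma~\ref{lemma:m4ptcontractivity}, which gives $e^{j+1}\le\eta^j e^j$ with $\eta^j\le\frac{7}{8}<1$ for $j\ge J^*$. The only point needing care is that Lemma~\ref{lemma:m4ptcontractivity} establishes contractivity only from level $J^*$ on, whereas Result~A is phrased with a uniform factor. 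I would absorb this using the remark accompanying Result~A that its hypotheses may be verified on any sequence of control polygons produced by the scheme: applying Result~A to the tail sequence $\{\mathcal{P}^{J^*+\ell}\}_{\ell\in\mathbb{N}_0}$, whose initial polygon is $\mathcal{P}^{J^*}$, yields convergence of the points from level $J^*$ onward; prepending the finitely many earlier refinements, which merely insert fixed points, does not affect convergence, so the full point sequence converges.

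For the normals I would note that step~5 of the circle-average construction sets $n_\omega = GA(n_0,n_1;\omega)$, a weighted geodesic average that is completely independent of the points. Consequently the normals are refined by the rule (\ref{eq:l4pt_repeated}) with the arithmetic mean replaced throughout by the geodesic mean, i.e. by the manifold-valued $4$-point scheme. This is precisely the situation covered by Result~C, so the normals converge.

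Combining the convergence of the points with the convergence of the normals gives the corollary. I expect no genuine obstacle: the substantive estimate has already been carried out in Lemma~\ref{lemma:m4ptcontractivity}, and what remains is bookkeeping that matches the two halves of the point-normal construction to Results~A and~C. The single delicate point is the gap between ``contractive for $j$ large enough'' and the uniform hypothesis of Result~A, which the remark on checking the conditions along arbitrary control-polygon sequences is designed to bridge.
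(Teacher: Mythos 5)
Your proposal is correct and follows the paper's own route exactly: convergence of the points via Result~A, using displacement safety from interpolation together with the contractivity of Lemma~\ref{lemma:m4ptcontractivity}, and convergence of the normals via Result~C, since the geodesic averaging of normals is decoupled from the points. Your explicit handling of the tail issue (contractivity only from level $J^*$ on) is a welcome clarification of a point the paper leaves implicit, but it rests on the same remark about Result~A applying to any sequence of control polygons, so it is not a different argument.
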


%------------------------------------------------------------------------------
\subsection{Naive choice of initial normals}
\label{subsec:naive}
In previous sections we discussed the scenario in which normals are given
at every vertex of the input control polygon. In this section we propose a 
method for determining initial normals at the vertices of a given control polygon.
\begin{figure} [!htb]
	\begin{subfigure}[b]{1.0\textwidth}
		{(a)\hspace{17px} } \includegraphics[scale=0.8]{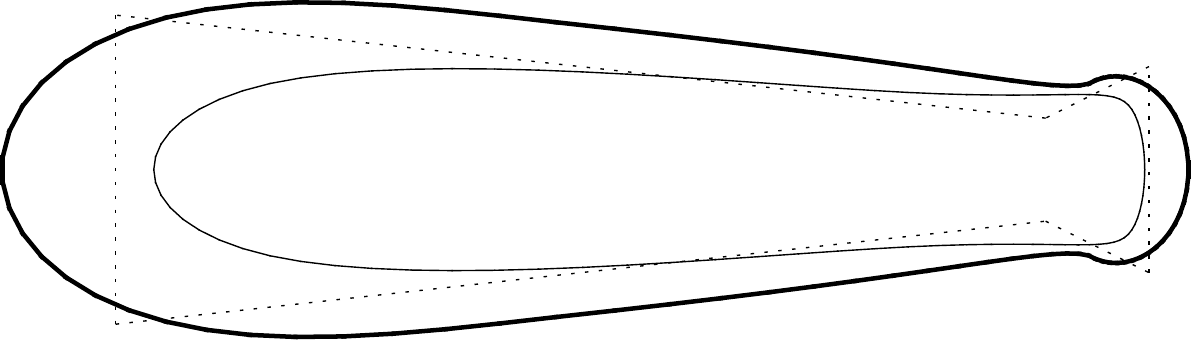}
		%\caption{MLR (bold) and LLR (dots)}
	\end{subfigure}
	\\ \\
	\begin{subfigure}[b]{1.0\textwidth}
		{(b)\hspace{22px}}\includegraphics[scale=0.8]{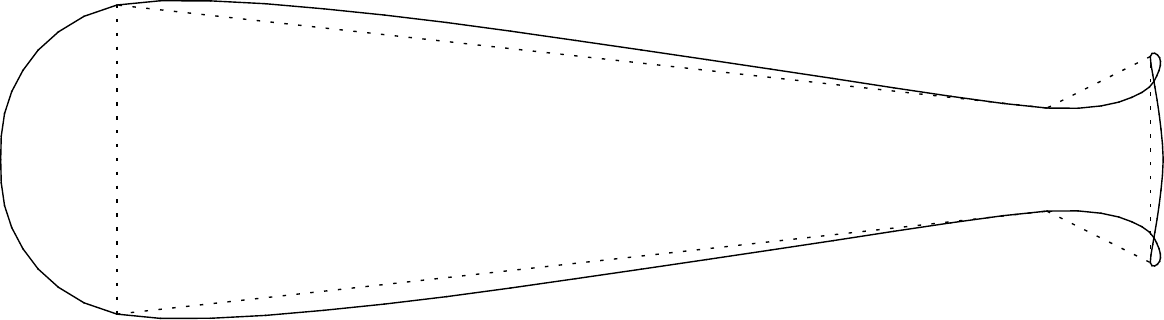}
		%\caption{M4Pt (bold) and L4Pt (dots)}
	\end{subfigure}
	\\ \\
	\begin{subfigure}[b]{1.0\textwidth}
		{(c)}\includegraphics[scale=0.8]{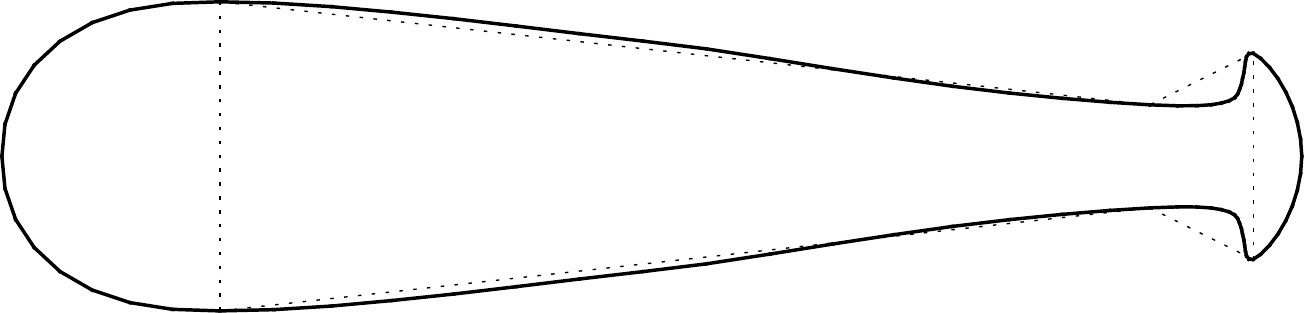}
		%\caption{M4Pt (bold) and L4Pt (dots)}
	\end{subfigure}
	\captionsetup{justification=centering}
	\caption{Comparison between modified schemes and their corresponding 
		linear schemes.
		Same initial control polygon (dots); (a)MLR (bold) and LLR (regular); (b)L4Pt; (c) M4Pt. }
	\label{fig:bottle_naivenorm}
\end{figure}

To determine a normal at the vertex $p_i$, we first
compute the normals  $v_{i-1}, v_i$ to the neighboring edges of the vertex $p_i$,
and the length of these edges, $d_{i-1}, d_i$. We chose the direction of $v_i$,
the normal to the edge $p_ip_{i+1}$, such that $v_i \ \times \ \overrightarrow{p_ip_{i+1}} > 0$. The normal at $p_i$ is the weighted
geodesic average of $v_{i-1},v_i$ as defined in (\ref{eq:geoavg}),
with weights proportional to the reciprocal of the length of the 
corresponding edge,
\[n_i = GA\Big( v_{i-1}, v_{i}; \frac{d_{i-1}}{d_i + d_{i-1}}\Big).\]
In case $p_i$ is a boundary vertex, the normal is taken as that of the only
neighboring edge. 

Figure~\ref{fig:bottle_naivenorm} depicts a control polygon and different 
curves obtained from it by two modified schemes with initial normals computed by
the "naive method". For comparison the curves generated by the corresponding
linear schemes from the same initial control polygon are also shown.
Figure~\ref{fig:bottle_naivenorm}a demonstrates that the MLR algorithm
with $m=3$ preserves the shape of the control polygon more accurately than 
the corresponding LLR scheme.
In Figure~\ref{fig:bottle_naivenorm}b we see that the L4Pt scheme generates
a self intersecting curve while the curve of the M4Pt scheme is self intersection 
free and follows the shape of the initial polygon smoothly.

The proposed "naive method" determines intuitive initial normals, which can 
be modified later on, as is shown in the example 
of subsection~\ref{subsubsec:video}.

\section*{Acknowledgment}
The authors thank Prof. D. Cohen-Or for fruitful discussions, and the reviewers for their valuable comments. This work
was partially supported by Minkowski Minerva center at Tel-Aviv University.

%-------------------------------------------------------------------------------
%\section*{\refname}
%\bibliographystyle{plain}
%\bibliography{refs}

%-------------------------------------------------------------------------------
\end{document}